\renewcommand{\:}{\colon}
\newcommand{\IR}{{\mathbb{R}}}
\newcommand{\IN}{{\mathbb{N}}}
\newcommand{\IZ}{{\mathbb{Z}}}
\newcommand{\IP}{{\mathbb{P}}}
\newcommand{\IC}{{\mathbb{C}}}
\newcommand{\abs}[1]{\left| #1 \right|}
\renewcommand{\d}{\,\mathrm{d}}
\renewcommand{\sc}[1]{\left< #1 \right>}
\newcommand{\nn}[1]{\left\Vert #1 \right\Vert}
\newcommand{\ind}{1\hspace{-0.8ex}1}
\newcommand{\tr}{\operatorname{tr}}
\newcommand{\supp}{\operatorname{supp}}
\newcommand{\Id}{\operatorname{Id}}
\newtheoremstyle{fgrs} 
                        {0.5em}    
                        {0.5em}    
                        {}         
                        {}         
                        {\bfseries}
                        {}        
                        {\newline} 
                        {}
\newtheoremstyle{mydef} 
                        {0.5em}    
                        {0.5em}    
                        {}         
                        {}         
                        {\bfseries}
                        {}        
                        {\newline} 
                        {}
\theoremstyle{mydef}
\renewcommand{\H}{\mathcal{H}}
\renewcommand{\d}{{\mathrm d}}
\renewcommand{\Re}{\operatorname{Re}}
\renewcommand{\i}{\mathrm{i}}
\newcommand{\IS}{\mathbb{S}}
\newcommand{\R}{\mathbb{R}}
\newcommand{\Def}{\mathcal{D}}
\newcommand{\q}{\mathsf q}
\newcommand{\z}{z}
\newcommand{\SL}{\operatorname{SL}}
\newcommand{\SO}{\operatorname{SO}}
\newcommand{\PSL}{\operatorname{PSL}}
\newcommand{\Cda}{C_d(a)}
\newcommand{\refJ}[1]{\hyperref[Js]{($\text{J}_{#1}$)}}
\setlist[description]{font=\normalfont\itshape}
\renewcommand{\q}{\hat{\mathsf x}}
\newcommand{\EN}{E_N}
\newcommand{\Pwg}{\IP^{\operatorname{WP}}_g}
\newcommand{\Mg}{\mathcal{M}_g }
\newcommand{\cgs}{\psi_0}
\newcommand{\Iso}{\operatorname{Iso}}
\newcommand{\Y}{Y}
\newcommand{\ben}{\begin{displaymath}}
\newcommand{\een}{\end{displaymath}}
\newcommand{\beqn}{\begin{equation}}
\newcommand{\eeqn}{\end{equation}}
\newcommand{\beqna}{\begin{eqnarray*}}
\newcommand{\eeqna}{\end{eqnarray*}}
\numberwithin{equation}{section}
\newtheorem{thm}{Theorem}[section]
\newtheorem{lemma}[thm]{Lemma}
\newtheorem{prop}[thm]{Proposition}
\newtheorem{coro}[thm]{Corollary}
\theoremstyle{definition}
\theoremstyle{remark}
\newtheorem{remark}[thm]{Remark}
\crefname{thm}{Theorem}{Theorems}
\crefname{coro}{Corollary}{Corollaries}
\crefname{hyp}{Hypothesis}{Hypotheses}
\Crefname{hyp}{Hypothesis}{Hypotheses}
\crefname{lemma}{Lemma}{Lemmas}
\Crefname{lemma}{Lemma}{Lemmas}
\crefname{prop}{Proposition}{Propositions}
\crefname{enumi}{}{}
\Crefname{enumi}{}{}
\crefname{equation}{}{}
\Crefname{equation}{}{}
\newcommand{\Ee}{\mathcal{E}}
\newcommand{\IH}{\mathbb{H}}
\newcommand{\absh}[1]{d(o,#1)}
\newcommand{\vol}[1]{\operatorname{
vol}(#1)}
\newcommand{\name}[1]{#1}
\begin{document}
\title{Bose-Einstein condensation on hyperbolic spaces}
\author{Marius Lemm}
\author{Oliver Siebert}
\affil{\small{Department of Mathematics, University of Tübingen, Auf der Morgenstelle 10, 72076 Tübingen, Germany}}
\date{February 12, 2022}
\maketitle
\begin{abstract} 
 A well-known conjecture in mathematical physics asserts that the interacting Bose gas exhibits Bose-Einstein condensation (BEC) in the thermodynamic limit. We consider the Bose gas on certain hyperbolic spaces. In this setting, one obtains a short proof of BEC in the infinite-volume limit from the existence of a volume-independent spectral gap of the Laplacian.
\end{abstract}
\section{Introduction}

The Bose gas plays a central role in quantum many-body physics. The key effect it displays is \textit{Bose-Einstein condensation (BEC)}, macroscopic occupation of a single-particle quantum state. BEC is immensely useful for technological applications because it amplifies microsopic quantum effects to macroscopic scales. The original theoretical prediction of BEC was made for an ideal (i.e., non-interacting) Bose gas by Bose and Einstein \cite{bose1924,einstein1924}. A more realistic description of the Bose gas involves interactions between particles, resulting in a full quantum many-body problem. 

To describe $N$ bosons confined to an Euclidean torus $\mathbb T_L=(-L/2,L/2)^3$ and interacting via a repulsive two-particle interaction $V\geq 0$, one uses the Hamiltonian
\begin{equation}
H_N=\sum_{i=1}^N (-\Delta_{x_i}) +\sum_{1\leq i<j\leq N} V(x_i-x_j).
\end{equation}
A famous conjecture, mathematically formulated by Lieb in 1998 \cite{bec_openproblems} but probably several decades older, states that the Hamiltonian $H_N$ exhibits Bose-Einstein condensation in its ground state in the thermodynamic limit. More precisely, Lieb's conjecture asserts that there exists a constant $c>0$ such that
\begin{equation}\label{eq:Liebconjecture}
\int_{\mathbb T_L}\int_{\mathbb T_L}\gamma(x,y)\mathrm{d}x\, \mathrm{d}y \stackrel?\geq c L^3
\end{equation}
where 
$$
\gamma(x,y)=\int_{\mathbb T_L^{N-1}} \Psi_0(x,\mathbf{x}) \Psi_0(y,\mathbf{x})\mathrm{d}\mathbf{x}
$$
denotes the $1$-particle correlation function of the unique ground state $\Psi_0\geq 0$ of $H_N$. Indeed, \eqref{eq:Liebconjecture} captures the macroscopic occupation of a single one-particle state. Note for example that \eqref{eq:Liebconjecture} is satisfied for the non-interacting Bose gas with $V\equiv 0$. 

Despite the central role that the Bose gas plays in mathematical physics, this conjecture remains open. Existing proofs of BEC in the Euclidean thermodynamic limit require reflection positivity \cite{aizenman2004bose} or other special positivity properties \cite{koma2021bose}. Nonetheless, spectacular progress has been achieved on the mathematics of the Bose gas in the past twenty years. For instance, the famous Lee-Huang-Yang formula for the subleading energy correction in the dilute limit $\rho a^3\to 0$ has been rigorously derived \cite{yy,fournais_solovej,fournais_solovej2}, which resolved a longstanding open problem. Many works have contributed to precise understanding of BEC and the ground state energy in the Gross-Pitaevskii scaling regime; see e.g., \cite{dyson,gs_energy, lsy,lieb_seiringer_firstproof,rotating1,greenbook,quantum_finetti,bbcs1,bbcs3,bbcs2,deuchert2020gross,nnt,hainzl}. For further background and references, see \cite{greenbook} and the recent survey \cite{rougerie}.

\subsection{Main result: BEC in the infinite-volume limit on hyperbolic space}
In this paper, we study the Bose gas in a \textit{hyperbolic geometry}. The intriguing features of hyperbolic geometry have long inspired mathematicians and physicists alike.  The first uses of hyperbolic geometry in condensed matter theory were to our knowledge based on the AdS-CFT correspondence principle \cite{Witten1998AntideSS,maldacena1999large} which has deepened our understanding of quantum entanglement and may hold the key to quantum error correction. 

More recently, experimentalists have been able to \textit{physically construct hyperbolic structures in the laboratory} by confining particles to discrete hyperbolic lattices in circuit QED \cite{hu2019quantum,hyp_lattices1,hyp_lattices2, PRXQuantum.2.017003, saa2021higher} and by using topoelectric circuits \cite{lenggenhager2021electric}. These setups can serve, among other things, as tabletop simulators of quantum gravity. These recent experimental advances spawned a new interdisciplinary subfield of theoretical physics that blends condensed-matter physics with quantum information science and general relativity \cite{graphs_to_geometry, maciejko2020hyperbolic, ikeda2021hyperbolic, stegmaier2021universality,zhang2021efimov}.  In particular, it was shown that continuum limits of some of these hyperbolic quantum lattice gases produce suitable hyperbolic continuum models \cite{graphs_to_geometry}. Hyperbolic Bose gases were studied, e.g., in \cite{cognola1993bose,KIRA2015185, zhu2021quantum} for example. 

To summarize, it can be said that hyperbolic geometry has emerged as a viable, if still exotic, theater of condensed-matter physics in general and Bose gases in particular. Nonetheless, quantum many-body physics in hyperbolic space is mathematically considerably less explored than its Euclidean counterpart.

The main result of this paper can be summarized as follows.

\vspace{0.5cm}
\textbf{Main result:} \textit{Interacting Bose gases on two- and three-dimensional hyperbolic manifolds rigorously display Bose-Einstein condensation in the infinite-volume limit.}
\vspace{0.5cm}

One can thus say that the \textit{hyperbolic analog} of the conjecture formulated by Lieb holds true. The change of the geometry from Euclidean to hyperbolic is instrumental to proving our results. The key spectral feature in our appropriately chosen hyperbolic setting is the existence of a \textit{volume-independent spectral gap for the Laplacian}. It will not come as a surprise to experts that the large spectral gap removes many of the central difficulties present in the Euclidean case and leads to a rather short proof of BEC. Of course, one still needs to account for the change to hyperbolic geometry in the analytical arguments, in particular in the relevant two-particle scattering problem, but altogether this can be handled rather straightforwardly. Apart from the size of the spectral gap, the change in geometry then mostly surfaces in a different notion of \textit{scattering length} $a$ that is described in the appendix.

We now describe the models of hyperbolic manifolds that are used in this paper and summarize the result on BEC for these models. The precise setup is discussed in more detail in Section \ref{sec:model mainresults}.

\paragraph{Model 1: Quotients by congruence subgroups.} We consider hyperbolic manifolds of the form
$$
X_L=\mathbb H^d / \Gamma(L),\qquad L\geq 2
$$
where $d\in\{2,3\}$ and  $ \Gamma(L)$ is a group of isometries (congruence subgroup). The $\{X_L\}_{L\geq 2}$ form a family of non-compact hyperbolic manifolds with finite volume increasing to infinity,
$$
\mathrm{vol}(X_L)\to\infty,\qquad L\to\infty.
$$
Since they are generated by quotienting the whole space with respect to isometries, the $X_L$'s can be regarded as natural \textit{hyperbolic analogs of Euclidean tori}. For $d=2$, the $X_L$ are known as modular surfaces.

Let us now formulate the result on BEC. Let $\rho=\frac{N}{\mathrm{vol}(X_L)}$ be the particle density. We fix a potential $V\geq 0$ with $\supp V \subseteq B_{R_0}(0)$ for some finite range $R_0>0$. We write $a$ for a hyperbolic analog of the scattering length; see Appendix \ref{appendix} for its precise definition. Then we introduce the auxiliary parameter
\[
\Y = \begin{cases} \rho \ln \frac{1}{\tanh (a/2)}, &\textnormal{for }d = 2, \\ \rho \tanh a, &\textnormal{for }d=3.  \end{cases}
\]
We prove that
\begin{equation}\label{eq:mainresultXL}
\lim_{\Y \to 0} \lim_{\substack{N,L\to\infty\\ \rho=\frac{N}{\vol{X_L}}}}\sc{ \cgs, \gamma  \cgs}= 1, \qquad \textnormal{where }\psi_0=\frac{\mathbbm 1_{X_L}}{\sqrt{\mathrm{vol}(X_L)}},
\end{equation}
where inner product is taken with respect to $L^2(X_L)$.

Comparing with \eqref{eq:Liebconjecture}, we see that \eqref{eq:mainresultXL} indeed proves BEC in the infinite-volume limit for any sufficiently small $Y$. Crucially, ``sufficiently small'' does not depend on the system size $L$ since the $Y$-limit is taken after the infinite-volume limit in \eqref{eq:mainresultXL}. In fact, \eqref{eq:mainresultXL} proves that the occupation of the $\psi_0$-state converges to $1$ as $Y\to 0$. If all the particles belong to a single state up to subleading errors, one speaks of \textit{complete condensation}.

The precise statements for $d=2$ and $d=3$ are given in \cref{th:coro 2d,th:coro 3d} below. They show that the requirement that $Y$ is ``sufficiently small'' can be made explicit in terms of the other system parameters. Moreover, the same condition on $Y$ being sufficiently small also applies to \textit{finite} systems of the same density. The infinite-volume limit in \eqref{eq:mainresultXL} is added only for emphasis and the actual result is more general.

\paragraph{Model 2: Random compact hyperbolic surfaces.} There is a natural probability measure on compact hyperbolic manifolds of fixed volume called the Weil-Petersson measure $\mathbb P^{\mathrm{WP}}_g$. Let $\mathcal M_g$ denote the set of compact hyperbolic surfaces with genus $g$ up to isometry. By the Gauss-Bonnet theorem, the volume of any hyperbolic surface $X\in\mathcal M_g$ equals $2\pi (2g-2)$. By taking $g\to\infty$, we obtain compact hyperbolic manifolds whose volumes go to infinity.

Let $\varepsilon > 0$. We prove that there exists a family of measurable subsets (events) $\mathcal A_g\subset \mathcal M_g$ such that  
\begin{equation}\label{eq:mainrandom}
 \lim_{\rho \ln\frac{1}{\tanh a}\to 0 }  \lim_{ \substack{N,g\to\infty :\\ \rho=\frac{N}{2\pi(2g-2)}}}\Pwg (\mathcal A_g)=1
\end{equation}
and for every hyperbolic manifold $X\in\mathcal A_g$, we have
\[
\sc{ \cgs, \frac{\gamma}{N}  \cgs}  \geq 1 - \varepsilon, \qquad \textnormal{where }\psi_0=\frac{\mathbbm 1_{X}}{\sqrt{\mathrm{vol}(X)}}.
\]
This proves that BEC occurs for random compact hyperbolic manifolds with probability going to $1$ in the infinite-volume limit. For the precise statement, see \cref{th:coro random}.

These results rely on two main estimates for general hyperbolic Bose gases, Proposition \ref{th:lower bound} and Theorem \ref{th:upper bound coro}. The key common feature of the hyperbolic models described above is that the spectral gap of the corresponding Laplacian (Laplace-Beltrami operator) is bounded from below independently of the volume by deep results of Selberg \cite{selberg2} and Mirzakhani \cite{mirzakhani}. Quantitative improvements followed in \cite{sarnak3d,li1987poincare,arbitrarydimension,burger_sarnak,clozel2003demonstration,kim16functoriality}, respectively \cite{lambda_positive_random1,lambda_positive_random2}.

\subsection{Comparison to the Gross-Pitaevskii regime}\label{ssect:GP}
As mentioned above, a commonly studied scaling limit in Euclidean setting is the \textit{Gross-Pitaevskii (GP) regime} which is suitable for describing dilute Bose gases with strong short-ranged interaction. It is characterized by linking the length scale $L$ of the torus to the particle density $\rho=\frac{N}{L^3}$ and the scattering length  $a\in\R$ (which captures the essential features of strength and range of the potential $V$ for two-boson scattering) via
\begin{equation}\label{eq:GPlength}
L=C\frac{1}{\sqrt{\rho a}}
\end{equation}
Combined with the dilute limit $\rho a^3\to 0$ this defines the GP scaling limit. This is in contrast to the thermodynamic limit (which is the subject of the open conjecture in the Euclidean setting and which we consider here in the hyperbolic setting), where one can take $N,L\to\infty$ independently for  \textit{fixed} values of $\rho$ and $a$.

The investigation of the ground state energy asymptotics and BEC in the GP regime is a success story of mathematical physics. Landmark works in this direction include Dyson's study \cite{dyson}, the various works of Lieb, Seiringer, and Yngvason \cite{gs_energy, lsy,lieb_seiringer_firstproof,rotating1} (see also \cite{greenbook}) and more recent advances \cite{bbcs1,bbcs3,bbcs2,nnt}

 Recent approaches rigorously implement a heuristic description of excitations above the condensate due to Bogoliubov \cite{bogoliubov} through localization in Fock space, higher-order Bogoliubov transformations, and other technical innovations. The occurrence of BEC can be pushed to length scales larger than the GP scale \eqref{eq:GPlength} by further exploiting the close link between energy estimates and BEC on different spatial scales \cite{greenbook, fournais_lengthscales,adhikari2021bose}. For instance, it was shown in \cite{fournais_lengthscales} that BEC occurs up to length scales
$$
L=C\frac{(\rho a^3)^{-\delta}}{\sqrt{\rho a}},\qquad 0<\delta<\frac{1}{4}.
$$
and the upper bound on $\delta$ could be further improved somewhat by using methods from \cite{fournais_solovej} as described in  \cite{fournais_lengthscales}.

Despite these advances, the conjectured occurrence of BEC in the thermodynamic limit (i.e.\ for $L$ arbitrarily large) has remained open. A key reason for the failure of these ``energy methods'' beyond certain length scales is the fact that the spectral gap of the Laplacian on the torus $\mathbb T_L=(-L/2,L/2)^3$ vanishes as $L^{-2}$ for $L\to\infty$.

Our modest observation here is that energy methods are much more powerful in certain hyperbolic spaces because the change in geometry implies that the spectral gap of the Laplacian can be bounded independently of volume. (It is worth pointing out here that there are also other plenty of other apparently-natural hyperbolic settings where the spectral gap decreases with volume, e.g., balls with Neumann boundary conditions of increasing radii \cite[Theorem 5]{eigenvaluestozero}, so Models 1 and 2 considered above have to be chosen carefully.) At any rate, as a consequence of the spectral gap in Models 1 and 2, the proof of the main result is quite short and does not require recent advances on rigorous implementation of Bogoliubov's heuristic. This means that the argument provides no new insight on the Euclidean case.

The result raises some potentially interesting questions for further study in the hyperbolic setting.

(i) Our result on BEC is proved without identifying even the leading order of the energy asymptotics in the dilute limit. The spatial localization that commonly appears in energetic lower bounds is technically more challenging in the hyperbolic world because the local spectral gap of the Laplacian can be smaller than the global gap depending on the choice of boundary conditions. We leave it as an open problem to identify the leading asymptotics of the ground state energy in the hyperbolic setting.

(ii) A more precise analysis of the energy asymptotics would presumably be linked to a hyperbolic rendition of Bogoliubov theory \cite{bogoliubov,bbcs1,bbcs3,bbcs2,brietzke_solovej1, fournais_solovej,fournais_solovej2,hainzl}. This should reveal finer information about excitations above the condensate. It is conceivable that, as in the case of BEC considered here, the price for studying a slightly more complicated geometry is made up by its favorable spectral properties.

(iii) Another natural question concerns the fate of the BEC in the infinite-volume limit at positive temperature. The analogous problem has been resolved in the Euclidean setting, see e.g.\ the recent work \cite{deuchert2020gross} and references therein. This could be of practical relevance in case one finds a significantly larger critical temperature in the hyperbolic setting, keeping in mind that it is now possible to set up quantum gases in hyperbolic structures in the laboratory \cite{hu2019quantum,hyp_lattices1,hyp_lattices2, PRXQuantum.2.017003, saa2021higher,lenggenhager2021electric}. 

(iv) Similarly to Point (iii), one could consider a magnetic field in a hyperbolic geometry analogously to \cite{seiringer,rotating1,quantum_finetti} and others who proved BEC in the presence of magnetic fields in the Euclidean GP setting.


\subsection{Structure of the paper}

This paper is organized as follows.

\begin{itemize}

\item In \Cref{sec:model mainresults}, we state the main abstract results, the upper and lower bounds \cref{th:upper bound coro,th:lower bound}. Afterwards, we apply them to the concrete infinite-volume limits of hyperbolic manifolds described above to derive \cref{th:coro 2d,th:coro 3d,th:coro random}.

\item In \Cref{sec:upper bound}, we prove the upper bound \cref{th:upper bound coro}.  This follows an argument going back to \cite{dyson} in the modern form of \cite{2d_bosegas,greenbook}. The change in geometry leads to a few changes and, similarly to the Euclidean case, we obtain an upper bound on the effective $2$-particle problem by estimating integrals over the fundamental domain by integrals over the universal cover (for us, this is $\IH^d$), cf \eqref{eq:extend integral to Hd}. 

\item In \Cref{sec:lower bound}, we use the spectral gap of the Laplacian and the non-negativity of the potential to prove the lower bound, \cref{th:lower bound}.

\item In \Cref{appendix}
we generalize the scattering length to the hyperbolic setting, defining it via the radius of a hardcore potential, cf. \cref{th:variational}. As in the Euclidean case, we particularly use integration by parts and the inequality \eqref{eq:f_R greater} to estimate $I(f_R)$, $J(f_R)$ and $K(f_R)$. The results are analogous to the Euclidean setting and a key role is played by the harmonic function in the hyperbolic setting \eqref{eq:harmonic solutions}.
\end{itemize}

\section{Models and Main Results}
\label{sec:model mainresults}

\subsection{General facts about hyperbolic Bose gases}

For any $d \geq 2$ let $\IH^d$ denote the $d$-dimensional hyperbolic space. In $d=2$ we will work in the upper-half plane model
\[
\IH^2 = \{ \z_1 + \i \z_2 : \z_1 \in \IR, ~ \z_2 > 0\} \subseteq \IC
\]
equipped with the Riemannian metric
\[
\d s ^2 = \frac{1}{z_2^2}( \d \z_1 ^2 + \d \z_2^2 ).
\]
For dimensions $d \geq 3$ it will be more convenient to work in the hyperboloid model 
\begin{align}
\label{eq:hyperboloid model}
\IH^d := \{ \z \in \IR^{d+1} : \z_0 > 0 \text{ and } q_d(\z) = 1      \}, \quad q_d(\z) := \z_0^2 - \z_1^2 - \ldots - \z_d^2,
\end{align}
equipped with the pullback of the standard Lorentzian metric on $\IR^{d+1}$.
\begin{remark}
Throughout this paper we always use $x$ for elements of the hyperbolic manifolds and $z$ for elements of their universal cover $\IH^d$. 
\end{remark}

Let $X$ be a $d$-dimensional hyperbolic manifold, that is, a complete Riemannian manifold of constant curvature $-1$. Equivalently, $X = \IH^d / \Gamma$, where $\Gamma$ is a discrete subgroup of $\Iso(\IH^d)$ -- the group of isometries of $\IH^d$. We assume that $X$ has finite volume. Denote by $-\Delta \geq 0$ the standard Laplace-Beltrami operator acting on $L^2(X)$. Furthermore, let $V \in L^\infty(\IR_+)$  be a function with compact support and let $R_0 > 0$ such that $\supp V \subseteq [0,R_0]$. For $N \in \IN$ particles consider the Hilbert space of $N$ bosonic particles
\[
\H_N := P^+_N L^2( X^{\times N}),
\]
$P^+_N$ being the symmetrization operator in the $N$ components. In this space we define the Hamiltonian for the Bose gas on $X$ by
\begin{align}
\label{eq:bose gas hamiltonian}
H_N := -\mu \sum_{i=1}^N \Delta_i + \sum_{1 \leq i <j \leq N} V( d(\q_i, \q_j) ),
\end{align}
with domain $\Def(H_N) := P^+_N \Def( \sum_{i=1}^N \Delta_i ) = P^+_N H^2( X^{\times n} )$, where $\Delta_i$ denotes the operator acting as $\Delta$ on the $i$-th component, 
 $d \: X \times X \rightarrow [0,\infty)$ the distance function on $X$, and $d(\q_i, \q_j)$ the multiplication operator by the function
\[
X^n \ni (x_1, \ldots, x_n) \mapsto d(x_i, x_j).
\]
Note that $H_N$ is self-adjoint on $\Def(H_N)$ as $V$ is assumed to be essentially bounded. Furthermore, $H_N$ has a unique normalized ground state $\Psi_0 \in \H_N$ with corresponding ground state energy $E_N$. This can be deduced from the strict positivity of corresponding semigroup (cf. \cite[Theorem XIII.44]{rs4}), which in turn follows from $V \geq 0$ and the  fact that the semigroup associated to the Laplace-Beltrami on connected manifolds is positivity-improving (proven in \cite{positivity_improving_original}, see also \cite[p.139]{positivity_improving_book}).   

The one-particle density matrix $\gamma$ as a bounded operator on $L^2(X)$ is given by the integral kernel
\begin{align}
\label{eq:defn gamma}
\gamma(x,x') :=  \int_{X^{\times(N-1)}} \Psi_0(x,\mathbf{x}) \Psi_0(x', \mathbf{x}) \d \mathbf{x}.
\end{align}
Furthermore, let $\cgs := \vol X ^{-1/2} \ind_X$ be the ground state of $-\Delta$ on $X$, in other words, the normalized constant function on $X$. 

In order to establish BEC in the sense of \eqref{eq:mainresultXL} we use the following abstract lower bound for manifolds $X$ where the Laplacian has a gap. The proof can be found in \Cref{sec:lower bound}.
\begin{prop}[Lower bound]
\label{th:lower bound}
Let $X = \IH^d / \Gamma$ where $\Gamma$ is a discrete subgroup of $\Iso(\IH^d)$ such that $\vol X < \infty$. Assume there exists $\Xi > 0$ such that $-\Delta (\Id- \ket \cgs \bra \cgs) \geq \Xi$. Then we have 
\[
\sc{ \cgs,  \gamma \cgs} \geq 1- \frac{\EN}{N \Xi}.
\]
\end{prop}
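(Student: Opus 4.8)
The plan is to run the standard ``energy method'' that links Bose--Einstein condensation to a spectral gap, here in the hyperbolic setting. The hypothesis should be read as $-\Delta\geq \Xi\,(\Id-\ket\cgs\bra\cgs)$ on $L^2(X)$: since $\vol X<\infty$ the constant function $\cgs$ lies in $L^2(X)$ and spans the kernel of $-\Delta$, which is the bottom of the spectrum, so the gap above $0$ is exactly $\Xi$. The three ingredients are then: (i) $V\geq 0$, to drop the interaction in a lower bound on $\EN$; (ii) the gap, applied in each particle coordinate; and (iii) the bosonic symmetry of $\Psi_0$ together with the identity $\sc{\cgs,\gamma\cgs}=\sc{\Psi_0,P^{(1)}\Psi_0}$, where $P^{(i)}$ denotes the orthogonal projection onto $\cgs$ in the $i$-th variable.

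First I would note that $\Psi_0\in\Def(H_N)=P^+_N H^2(X^{\times N})$, so each $\sc{\Psi_0,(-\Delta_i)\Psi_0}=\|\nabla_i\Psi_0\|^2$ is finite and nonnegative, while $\sc{\Psi_0,V(d(\q_i,\q_j))\Psi_0}\geq 0$ because $V\geq 0$; hence
\[
\EN=\sc{\Psi_0,H_N\Psi_0}\geq \mu\sum_{i=1}^N\sc{\Psi_0,(-\Delta_i)\Psi_0}.
\]
Tensoring the one-particle inequality $-\Delta\geq \Xi(\Id-\ket\cgs\bra\cgs)$ with the identity on the remaining factors gives $-\Delta_i\geq \Xi(\Id-P^{(i)})$ on $L^2(X^{\times N})$, and evaluating the quadratic form on $\Psi_0$ yields
\[
\EN\geq \mu\,\Xi\sum_{i=1}^N\bigl(1-\sc{\Psi_0,P^{(i)}\Psi_0}\bigr).
\]

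By the bosonic symmetry of $\Psi_0$ the number $\lambda:=\sc{\Psi_0,P^{(i)}\Psi_0}$ does not depend on $i$, so the previous display reads $\EN\geq \mu\,\Xi\,N(1-\lambda)$, i.e.\ $\lambda\geq 1-\EN/(\mu\Xi N)$. Finally, unfolding the definition \eqref{eq:defn gamma} of $\gamma$ and applying Fubini,
\[
\sc{\cgs,\gamma\cgs}=\int_{X^{\times(N-1)}}\Bigl|\int_X\cgs(x)\,\Psi_0(x,\mathbf x)\,\d x\Bigr|^2\d\mathbf x=\sc{\Psi_0,P^{(1)}\Psi_0}=\lambda,
\]
which gives the claim (for $\mu=1$; for general $\mu$ the bound is $\EN/(\mu\Xi N)$). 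I do not expect a genuine obstacle here: the only steps requiring a little care are the passage from the one-particle form inequality to the $N$-particle one and the Fubini identification in the last line, both routine. All the real work sits in the hypothesis itself, namely establishing a volume-independent lower bound $\Xi$ on the spectral gap, which is carried out for the concrete models elsewhere.
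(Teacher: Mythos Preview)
Your proof is correct and follows essentially the same energy-method argument as the paper: drop the nonnegative potential, apply the one-particle spectral gap in each coordinate, and identify $\sc{\cgs,\gamma\cgs}$ with $\sc{\Psi_0,P^{(1)}\Psi_0}$. The paper phrases the middle step slightly differently, working with $\tr(-\Delta\gamma)$ and the spectral decomposition of $\gamma$ (together with a cutoff $P_m=\ind_{[0,m]}(-\Delta)$ to handle the unbounded operator), whereas you work directly with $\Psi_0$ and the many-body quadratic form; your route is a bit more streamlined but the content is identical. You are also right to flag the factor $\mu$: with the Hamiltonian $H_N=-\mu\sum_i\Delta_i+\sum V$ one actually obtains $\sc{\cgs,\gamma\cgs}\geq 1-\EN/(\mu N\Xi)$, which is what the applications effectively use when combined with the upper bound.
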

Hence, in order to obtain a concrete lower bound, we need now an upper bound for $E_N / N$. This will be now given in terms of a diluteness parameter defined as
\begin{align}
\label{eq:defnY}
\Y := \begin{cases} \rho \ln ((\tanh (a/2))^{-1}) &: d = 2, \\ \rho \tanh a &: d=3,  \end{cases}
\end{align}
where $a$ is the `hyperbolic scattering length' which depends only on the potential $V$ and is defined in \eqref{eq:harmonic solutions}. In fact, we can make  $E_N / N$ arbitrarily small if $\Y$ is small enough. To this end, for any $\varepsilon > 0$ let 
\begin{align}
\label{eq:Y0}
\Y_0(\varepsilon) := \begin{cases}  
\min \left\lbrace 3 \frac{\sqrt{ \frac{2\varepsilon }{3 \mu}  + 1 } -1 }{16 \pi}, (8 \pi  (R_0+1)^2 )^{-1}  \right\rbrace  &: d=2, \\ 
\min \left\lbrace 3 \frac{\sqrt{ \frac{2 \varepsilon }{3 \mu}  + 1 } -1 }{16 \pi e^{2 R_0}}, (8 e^{2R_0} (R_0+1)^2)^{-1}     \right\rbrace     &: d=3.   \end{cases}
\end{align}
Then we obtain the following (see \Cref{sec:upper bound} for the proof).
\begin{thm}[Abstract upper bound]
\label{th:upper bound coro}
Let $X = \IH^d / \Gamma$ where $\Gamma$ is a discrete subgroup of $\Iso(\IH^d)$ such that $\vol X < \infty$. Let $V$ be a potential supported in $[0,R_0]$ with hyperbolic scattering length $a$, and set $E_N = \inf \sigma(H_N)$. 
Given that the diluteness parameter $Y$ (defined as in \eqref{eq:defnY}) satisfies $\Y \leq (8 \pi  (R_0+1)^2 )^{-1}$ in $d=2$ or $\Y \leq (8 e^{2R_0} (R_0+1)^2)^{-1}$ in $d=3$, we have
\begin{align}
\label{eq:upper bounds concrete}
\frac{\EN}{N} \leq \begin{cases} 
16\pi \mu \Y  ( 1 + \frac{8 \pi }{3} \Y) 
  &: d=2, \\ 
 16 \pi \mu e^{2R_0} \Y \left( 1 + \frac{8}{3} \pi e^{2 R_0} \Y  \right) &:  d=3. \end{cases}
\end{align}
In particular, we have $\frac{\EN}{N} \leq \varepsilon$ for all $\Y \leq \Y_0(\varepsilon)$. 

\end{thm}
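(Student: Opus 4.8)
The plan is to prove \cref{th:upper bound coro} by constructing an explicit trial state for $H_N$ of Dyson-Jastrow type and estimating its energy expectation, following the classical strategy of \cite{dyson} in the form of \cite{2d_bosegas,greenbook}, adapted to hyperbolic geometry. Concretely, I would take the trial function $\Psi(x_1,\dots,x_N) = \prod_{i<j} f\big(d(x_i,x_j)\big)$ where $f$ is built from the hyperbolic zero-energy scattering solution associated to $V$ — the radial harmonic function described in \eqref{eq:harmonic solutions} of the appendix — truncated to equal $1$ outside some radius $b$ (chosen of order the interparticle spacing, and in any case $\geq R_0$). Since $V \geq 0$ and $f$ satisfies (in the region where it is non-constant) the two-body zero-energy equation, the usual cancellation between the kinetic energy of $f$ and the potential term occurs, and $\sc{\Psi, H_N \Psi}/\nn{\Psi}^2$ reduces, up to controlled three-body errors, to $\tfrac{N(N-1)}{2} \cdot (\text{something like } \mu \int |\nabla f|^2 )$ per particle, which is where the hyperbolic scattering length $a$ enters through the boundary value of $f$.

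The key steps, in order: (1) recall from \Cref{appendix} the hyperbolic scattering solution and the identity expressing $\mu \int |\nabla f|^2$ (the analog of $\int I(f_R)$, $J(f_R)$, $K(f_R)$ in the appendix) in terms of $a$ — this is what produces the factors $\ln(\tanh(a/2))^{-1}$ in $d=2$ and $\tanh a$ in $d=3$, i.e.\ the parameter $Y$; (2) choose the cutoff radius $b$ and expand $\sc{\Psi,H_N\Psi}$ into the "diagonal" two-body contribution plus the off-diagonal three-body terms, controlling the normalization $\nn{\Psi}^2$ from below by $1$ (using $0 \le f \le 1$); (3) bound the three-body error terms: here one integrates the non-constant part of $f$, which is supported in a ball of radius $b$ around each particle, over the manifold $X$, and crucially replaces the integral over a fundamental domain by the integral over the universal cover $\IH^d$ (this is the step flagged as \eqref{eq:extend integral to Hd}), where the volume of a hyperbolic ball grows like $e^{(d-1)b}$ — this is the origin of the exponential factors $e^{2R_0}$ in $d=3$; (4) optimize over $b$ (or simply pick $b$ proportional to $\rho^{-1/d}$, equivalently tie it to $Y$), insert the diluteness condition $Y \leq (8\pi(R_0+1)^2)^{-1}$ resp.\ $Y \leq (8e^{2R_0}(R_0+1)^2)^{-1}$ to ensure the error terms are genuinely subleading, and collect constants to arrive at \eqref{eq:upper bounds concrete}; (5) finally, from \eqref{eq:upper bounds concrete}, solve the quadratic inequality $16\pi\mu Y(1 + \tfrac{8\pi}{3}Y) \leq \varepsilon$ (resp.\ the $d=3$ version) to check that $Y \le Y_0(\varepsilon)$, with $Y_0(\varepsilon)$ as in \eqref{eq:Y0}, indeed implies $E_N/N \leq \varepsilon$; this last point is elementary algebra — the $\sqrt{2\varepsilon/(3\mu)+1}-1$ expression is exactly the positive root of the relevant quadratic.

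The main obstacle I expect is step (3): making the three-body error estimate both correct and explicit in hyperbolic geometry. In the Euclidean case one uses translation invariance and polynomial volume growth; here one must work with the fundamental domain of $\Gamma$ and pass to $\IH^d$, and the exponential volume growth of hyperbolic balls means the error terms carry factors like $e^{(d-1)b}$ that must be carefully tracked so that they do not overwhelm the main term — this is precisely why the diluteness thresholds and $Y_0(\varepsilon)$ in \eqref{eq:Y0} contain the $e^{2R_0}$ factors and the constant $8\pi$. A secondary technical point is checking that the trial state lies in the form domain (it does, since $f$ is Lipschitz with compact deviation from $1$ and $V \in L^\infty$) and that it is permutation symmetric, which it is by construction. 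Everything else — the lower bound $\nn{\Psi}^2 \geq 1$, the reduction to a two-body problem, and the final algebraic verification — is routine once the appendix's scattering-length computation is in hand.
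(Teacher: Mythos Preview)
Your overall strategy---a variational upper bound via a trial state built from the hyperbolic zero-energy scattering solution---matches the paper's. However, there are genuine gaps in your outline. First, your normalization step is backwards: from $0\le f\le 1$ you get $\|\Psi\|^2 \le \vol{X}^N$, not $\|\Psi\|^2 \ge 1$; a nontrivial \emph{lower} bound on $\|\Psi\|^2$ is exactly what is needed and is what produces the factor $(1-\rho I(f))^{-2}$ in the paper. Second, the paper does not use the full Jastrow product $\prod_{i<j} f(d(x_i,x_j))$ but the Dyson nearest-neighbor form $\Psi=\prod_{i=2}^N f(t_i)$ with $t_i=\min_{j<i} d(x_i,x_j)$; this structure is what permits the iterative integration and the clean lower bound on the denominator via \eqref{eq:Fj lower bound}. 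With the full Jastrow factor the combinatorics of the three-body terms and the normalization are substantially harder to control.

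Two further points of divergence: the cutoff radius in the paper is not tied to the density but is simply $R=\max\{R_0,a+1\}$, a fixed number; and the factor $e^{2R_0}$ in $d=3$ does not arise from hyperbolic ball volume $e^{(d-1)b}$ as you suggest, but from the elementary estimate $\tfrac{\tanh(a+1)}{\tanh(a+1)-\tanh a}\le e^{2a}\le e^{2R_0}$ on the hyperbolic harmonic function $f_\infty$. The passage to the universal cover \eqref{eq:extend integral to Hd} is indeed used, but to bound the integrals $I(f_R)$, $J(f_R)$, $K(f_R)$ over $X$ by their $\IH^d$-counterparts, after which explicit integration in hyperbolic polar coordinates (Lemmas on $I$ and $K$) and the identity $f_\infty'(r)\sinh^{d-1}r = C_d(a)$ do the work. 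Your final step (5), solving the quadratic for $Y_0(\varepsilon)$, is correct.
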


We now apply the combination of \cref{th:lower bound,th:upper bound coro} to two classes of hyperbolic manifolds which are known to have spectral gaps. In order to be able to obtain a thermodynamic limit, our goal is to find a sequence of manifolds of growing volume tending to infinity with a uniform spectral gap. The first one comprises special non-compact hyperbolic manifolds of finite volume.

\subsection{Modular surfaces}
\label{sec:2d application}

The case of $d=2$ dimensions where  one considers so-called modular surfaces, cf. \cite{sarnak},  is the most well-studied and most thoroughly understood one.
The special linear group $\PSL_2(\IR) := \SL_2(\IR) / \{ \pm \Id \}$ acts on $\IH^2 \subset \IC$ via Möbius transformations
\[
\begin{pmatrix}
a & b \\
c & d
\end{pmatrix} z := \frac{az + b}{cz + d},
\]
and is isomorphic to the group of orientation-preserving isometries of $\IH^2$. The modular surfaces arise by considering the action of a discrete subgroup of $\PSL_2(\IR)$, the modular group $\PSL_2(\IZ) := \SL_2(\IZ) / \{ \pm \Id \}$  and its congruence subgroups. The latter are defined as those subgroups, which contain one of the principal congruence subgroups given by
\[
\Gamma(L) := \{ A \in \SL_2(\IZ) : A = \Id \mod L \}, \quad L \in \IN,
\]
where $\mod L$ is to be understood as taken in each entry of the matrices. 
We can then define for each $L \in \IN$ a hyperbolic surface by
\[
X_L := \IH^2 / \Gamma(L),
\]
A fundamental domain for $X_1$ is given by (cf. \cite[Lemma 2.3.1]{modularforms_book})
\[
F_1 = \{ z \in \IH^2 : \Re z \leq 1/2,~ \abs z \geq 1\},
\]
and from that one can compute directly that $\vol {X_1} = \frac{\pi}{3}$. Furthermore, as
\[
[ \SL_2(\IZ) : \Gamma(L) ] = L^3 \prod_{p \text{ prime},~p|L} \left(1 - \frac{1}{p^2} \right),
\]
see \cite[Exercise 1.2.3(b)]{modularforms_book}, we can infer that 
\[
\vol{X_L} = [ \SL_2(\IZ) : \Gamma(L) ]   \vol{X_1} = L^3 \prod_{p \text{ prime},~p|L} \left(1 - \frac{1}{p^2} \right)\frac{\pi}{3}.
\]
In particular this shows that $\vol{ X_L } \to \infty$, as $L \to \infty$. 

Next, we need a uniform spectral gap. First, one can show that there is a gap of $\frac{1}{4}$ for the continuous spectrum of any hyperbolic surface. 
\begin{prop}[\cite{selberg1,sarnak}]
Let $X$ be a Riemannian surface, that is, $X = \IH^2 / \Gamma$, where $\Gamma$ is any discrete subgroup of the group of isometries of $\IH^2$. Then the continuous spectrum of the Laplacian on $X$ equals $[1/4, \infty)$. 
\end{prop}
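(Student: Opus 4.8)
The plan is to reduce the computation of the continuous spectrum to a purely local analysis at the ends of $X$. We may assume $X$ is non-compact (if $X$ is compact then $-\Delta$ has purely discrete spectrum), which is the relevant case, since the modular surfaces $X_L = \IH^2/\Gamma(L)$ are non-compact of finite volume. By the structure theory of finite-volume hyperbolic surfaces, $X$ is the union of a compact set $K_0$ and finitely many cusp neighbourhoods $\mathcal C_1, \dots, \mathcal C_m$, each $\mathcal C_j$ isometric to a standard cusp
\[
\mathcal C_a = \{ (x,y) : x \in \IR/\IZ,\ y > a \}, \qquad \d s^2 = \frac{\d x^2 + \d y^2}{y^2},
\]
on which $-\Delta = -y^2(\partial_x^2 + \partial_y^2)$ in $L^2(\mathcal C_a, y^{-2}\,\d x\,\d y)$; the cusps correspond to the $\Gamma$-orbits of parabolic fixed points. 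It then suffices to show $\sigma_\ess(-\Delta) = [1/4,\infty)$: below $1/4$ the spectrum is then discrete, while in $(1/4,\infty)$ one adds at most embedded eigenvalues, so that the continuous spectrum is exactly $[1/4,\infty)$.

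The first step is to decouple the compact core from the cusps. Using Dirichlet--Neumann bracketing along the compact hypersurfaces $\partial \mathcal C_j$ -- or equivalently Persson's characterization
\[
\inf \sigma_\ess(-\Delta) = \sup_{K} \ \inf \Big\{ \sc{u, -\Delta u} : u \in \Cci(X \setminus K),\ \nn u = 1 \Big\},
\]
the supremum over compact $K \subseteq X$ -- together with the fact that $-\Delta$ on $K_0$ with Neumann conditions has compact resolvent and hence empty essential spectrum, one obtains that $\sigma_\ess(-\Delta)$ on $X$ is the union over $j$ of the essential spectra of the Laplacian on the standard cusps $\mathcal C_{a_j}$ (with Neumann conditions at the boundary). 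Everything thus reduces to one model cusp.

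On $\mathcal C_a$ I would expand $f(x,y) = \sum_{n \in \IZ} f_n(y)\, e^{2\pi \i n x}$ in the periodic variable, which diagonalizes $-\Delta$ as the orthogonal sum over $n \in \IZ$ of the half-line operators
\[
\ell_n = -y^2 \frac{\d^2}{\d y^2} + 4\pi^2 n^2 y^2 \qquad \text{on } L^2\big((a,\infty), y^{-2}\,\d y\big).
\]
For $n \neq 0$ the potential $4\pi^2 n^2 y^2 \to \infty$ as $y \to \infty$, so $\ell_n$ has compact resolvent and contributes no essential spectrum. For $n = 0$, substituting $y = e^t$ (which identifies $L^2((a,\infty), y^{-2}\,\d y)$ with $L^2((\log a,\infty), e^{-t}\,\d t)$) and conjugating by the unitary $g \mapsto e^{t/2} g$ turns $\ell_0$ into $-\frac{\d^2}{\d t^2} + \frac14$ on $L^2((\log a, \infty), \d t)$, whose spectrum is $[1/4,\infty)$ and purely absolutely continuous. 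Hence the essential spectrum of the Laplacian on $\mathcal C_a$ is $[1/4,\infty)$, and with the previous paragraph $\sigma_\ess(-\Delta) = \sigma_{\mathrm{cont}}(-\Delta) = [1/4,\infty)$.

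The main obstacle is the second step: rigorously justifying that the essential spectrum of $X$ splits as the union of a trivial compact-core contribution and the cusp contributions. This rests on the geometric input that a finite-volume hyperbolic surface is a compact piece glued to standard cusp neighbourhoods -- a consequence of the Margulis lemma and the classification of elementary Fuchsian groups -- and on a bracketing (IMS-localization) argument controlling the coupling between the pieces; once the model cusp is isolated, the remaining computation is an elementary separation of variables and a one-dimensional change of variables. Alternatively, one can bypass these steps by invoking directly Selberg's spectral decomposition of $L^2(X)$ into the discrete part spanned by residues of Eisenstein series and the direct integral $\int_{1/4}^{\infty}$ of Eisenstein series attached to the cusps, which by construction exhibits $[1/4,\infty)$ as the continuous spectrum; this is the route of \cite{selberg1, sarnak}.
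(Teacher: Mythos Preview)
The paper does not supply its own proof of this proposition; it is simply quoted from the literature with the citation \cite{selberg1,sarnak}. There is therefore nothing in the paper to compare your argument against.

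That said, your sketch is a standard and essentially correct route to the result in the finite-volume non-compact case, which is the one relevant to the paper's applications (the modular surfaces $X_L$). One point deserves a little more care: Dirichlet--Neumann bracketing and Persson's theorem pin down $\sigma_{\ess}(-\Delta)=[1/4,\infty)$, but they do not by themselves determine the spectral \emph{type}. You observe that the zero Fourier mode on the decoupled cusp is purely absolutely continuous, but one still has to argue that this absolutely continuous piece survives the gluing back to $X$ (a relatively compact perturbation of the resolvent). This is exactly what the Eisenstein-series decomposition you mention at the end provides, and indeed that is the route taken in the references cited; so your final paragraph is not an ``alternative'' so much as the missing ingredient needed to upgrade the essential-spectrum statement to a continuous-spectrum statement. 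Also note that the proposition as stated in the paper allows arbitrary discrete $\Gamma$, whereas your argument uses the thick--thin decomposition specific to the finite-volume case; the infinite-volume case (funnel ends rather than cusps) requires a separate, though analogous, model computation.
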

It remains to find a similar bound for the lowest non-zero eigenvalue. \name{Selberg} conjectured that one actually has the same lower bound $1/4$ \cite{selberg1}. Although this remains an open problem, there are several proofs for slightly weaker bounds, being sufficient for our application. The to the authors' best knowledge best one is given in the following theorem. 
\begin{thm}[{\cite{kim16functoriality}}]
\label{th:lambda1 2d}
Let $\Gamma$ be a congruence subgroup of $\PSL_2(\IZ)$ and $X = \IH^2 / \Gamma$.
For the smallest non-zero eigenvalue $\lambda_1(X)$ of the Laplacian on $X$ one has
\[
\lambda_1(X) \geq \frac{1}{4} - \left( \frac{7}{64} \right)^2 = \frac{975}{4096}.
\]
\end{thm}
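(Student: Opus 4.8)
This theorem is a deep result from the theory of automorphic forms, and rather than reproving it I would recall the architecture of its proof, following \cite{kim16functoriality} and the appendix of Kim and Sarnak therein. The first step is a spectral reinterpretation. By the preceding proposition the continuous spectrum of $-\Delta$ on $X = \IH^2/\Gamma$ equals $[1/4,\infty)$, so the only eigenvalues that could violate the claimed bound are exceptional eigenvalues $\lambda \in (0,1/4)$. Writing $\lambda = \tfrac14 - \theta^2$ with $\theta \in (0,\tfrac12)$, such a $\lambda$ belongs to a Maass cusp form on $X$, which -- because $\Gamma$ is a congruence subgroup of $\PSL_2(\IZ)$ -- generates a cuspidal automorphic representation $\pi = \bigotimes_v \pi_v$ of $\mathrm{GL}_2(\mathbb{A}_{\IQ})$ whose archimedean component $\pi_\infty$ is a complementary-series representation with parameter $\theta$. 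Thus the theorem is equivalent to the bound $\theta \le 7/64$ towards the Ramanujan--Petersson conjecture at the archimedean place, valid uniformly over all congruence subgroups; it is precisely this $L$-independence that makes the result usable for the infinite-volume limit.

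The central step is functoriality of the symmetric power lifts for $\mathrm{GL}_2$. Starting from the symmetric-square lift of Gelbart--Jacquet and the symmetric-cube lift of Kim--Shahidi, Kim established that the symmetric fourth power $\operatorname{Sym}^4\pi$ is an isobaric automorphic representation of $\mathrm{GL}_5(\mathbb{A}_{\IQ})$. The proof proceeds via the Langlands--Shahidi method, which controls the analytic continuation and non-vanishing of the Rankin--Selberg and symmetric/exterior-square $L$-functions appearing in the constant terms of Eisenstein series, combined with the Cogdell--Piatetski-Shapiro converse theorem to recognize the lift as automorphic. Once $\operatorname{Sym}^4\pi$ is known to be automorphic on $\mathrm{GL}_5$, the numerical bound is extracted by feeding its archimedean parameters $\{4\theta, 2\theta, 0, -2\theta, -4\theta\}$ into the known bounds towards Ramanujan for $\mathrm{GL}_n$ (Jacquet--Shalika, Luo--Rudnick--Sarnak), refined by a careful analysis of Rankin--Selberg $L$-functions built from $\operatorname{Sym}^4\pi$; this yields $\theta \le 7/64$ and hence $\lambda_1(X) \ge \tfrac14 - (7/64)^2 = 975/4096$.

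The main obstacle is of course the functoriality input itself, which rests on a vast body of work -- the trace formula, the Langlands--Shahidi theory of $L$-functions, converse theorems -- and is well beyond a self-contained treatment. It is worth stressing that for the purposes of this paper only the \emph{existence} of a positive lower bound on $\lambda_1(X)$ that is uniform over the congruence subgroups is needed; the older and substantially more elementary bound $\lambda_1 \ge 3/16$ due to Selberg \cite{selberg1}, which follows from Weil's estimate for Kloosterman sums, would already suffice. We quote the sharper bound of \cite{kim16functoriality} only because it is the best currently known.
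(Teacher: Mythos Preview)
The paper does not prove this theorem at all: it is stated with a citation to \cite{kim16functoriality} and used as a black box, with a remark that Selberg's earlier bound $\lambda_1(X)\ge 3/16$ from \cite{selberg2} would already suffice. There is therefore no ``paper's own proof'' to compare against.

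Your sketch of the Kim--Sarnak argument is accurate at the level of architecture: the reduction of exceptional eigenvalues to complementary-series parameters $\theta$, the automorphy of $\operatorname{Sym}^4\pi$ on $\mathrm{GL}_5$ via Langlands--Shahidi and converse theorems, and the extraction of $\theta\le 7/64$ from the archimedean parameters using Rankin--Selberg and Luo--Rudnick--Sarnak type bounds. Your closing observation that only a volume-independent positive gap is needed, and that Selberg's $3/16$ already delivers this, exactly matches the paper's own remark following the theorem. One minor point: the paper attributes the $3/16$ bound to \cite{selberg2} rather than \cite{selberg1}.
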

\begin{remark}
\name{Selberg} already proved the bound $\lambda_1(X) \geq \frac{3}{16}$ in \cite{selberg2}. For a discussion of further bounds we refer the reader to \cite{sarnak}. 
\end{remark}
Now, combining \cref{th:lower bound,th:upper bound coro} with \cref{th:lambda1 2d} in the setting of modular surfaces yields the following first application. 
\begin{coro}
\label{th:coro 2d}
Let $R_0 > 0$. For all $\varepsilon > 0$ and all potentials $V$ with $\supp V \subseteq R_0$, scattering length $a$ and all $N,L \in \IN$ satisfying
\[
\rho \ln ( ( \tanh a)^{-1} ) = \frac{N}{\vol{X_L}} \ln ( ( \tanh a)^{-1} ) < \Y_0 \left(  \frac{975}{4096} \varepsilon \right)
\]
where $\Y_0(\cdot)$ is defined in \eqref{eq:Y0}, we have
\[
\sc{ \cgs,  \gamma \cgs} \geq 1 - \varepsilon. 
\]
\end{coro}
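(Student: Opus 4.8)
The plan is to combine the two abstract bounds from the excerpt essentially verbatim, with the only real content being to verify that the hypotheses of \cref{th:lower bound} are met with an explicit, volume-independent gap constant $\Xi$ in the modular-surface setting, and then to chase the constants through so that the stated diluteness condition on $\rho\ln((\tanh a)^{-1})$ forces $\sc{\cgs,\gamma\cgs}\geq 1-\varepsilon$. First I would fix $\varepsilon>0$, the range $R_0$, a potential $V$ with $\supp V\subseteq[0,R_0]$ and scattering length $a$, and integers $N,L$ satisfying the displayed inequality $\rho\ln((\tanh a)^{-1})=Y<\Y_0(\tfrac{975}{4096}\varepsilon)$, where $\rho=N/\vol{X_L}$ and (since $d=2$) $Y=\rho\ln((\tanh a)^{-1})$ is exactly the diluteness parameter of \eqref{eq:defnY}. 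Note that $\Y_0(\tfrac{975}{4096}\varepsilon)\leq (8\pi(R_0+1)^2)^{-1}$ by definition, so the smallness hypothesis of \cref{th:upper bound coro} in $d=2$ is automatically satisfied.

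Next I would establish the spectral gap input. Since $\Gamma(L)$ is a congruence subgroup of $\PSL_2(\IZ)$ and $X_L=\IH^2/\Gamma(L)$ has finite volume (as computed in the excerpt, $\vol{X_L}=[\SL_2(\IZ):\Gamma(L)]\,\tfrac{\pi}{3}$), \cref{th:lambda1 2d} gives $\lambda_1(X_L)\geq \tfrac{975}{4096}$, and the preceding proposition gives that the continuous spectrum is $[1/4,\infty)\subseteq[\tfrac{975}{4096},\infty)$. The spectrum of $-\Delta$ on $X_L$ below $1/4$ is therefore purely discrete, the bottom eigenvalue is $0$ with eigenfunction the constant $\cgs$ (here one uses that $X_L$ is connected so the kernel of $-\Delta$ is one-dimensional, which is implicit in the excerpt), and every other spectral value is $\geq \tfrac{975}{4096}$. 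Hence $-\Delta(\Id-\ket\cgs\bra\cgs)\geq \Xi$ with $\Xi=\tfrac{975}{4096}$, which is manifestly independent of $L$. This is the only place where the specific geometry enters, and it is the step I would treat most carefully, though it is immediate given the cited results.

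Then I apply the two abstract estimates. By \cref{th:lower bound} with this $\Xi$,
\[
\sc{\cgs,\gamma\cgs}\ \geq\ 1-\frac{\EN}{N\,\Xi}\ =\ 1-\frac{4096}{975}\cdot\frac{\EN}{N}.
\]
By \cref{th:upper bound coro} in $d=2$, since $Y\leq \Y_0(\tfrac{975}{4096}\varepsilon)$, we have $\EN/N\leq \tfrac{975}{4096}\varepsilon$ (this is exactly the ``$\EN/N\leq\varepsilon'$ for all $Y\leq\Y_0(\varepsilon')$'' clause of the theorem applied with $\varepsilon'=\tfrac{975}{4096}\varepsilon$). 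Substituting,
\[
\sc{\cgs,\gamma\cgs}\ \geq\ 1-\frac{4096}{975}\cdot\frac{975}{4096}\varepsilon\ =\ 1-\varepsilon,
\]
which is the claim. A minor bookkeeping point: the statement writes $\rho\ln((\tanh a)^{-1})$ whereas \eqref{eq:defnY} uses $\ln((\tanh(a/2))^{-1})$; since $0<\tanh(a/2)<\tanh a<1$, one has $\ln((\tanh(a/2))^{-1})>\ln((\tanh a)^{-1})$, so I would remark that the hypothesis as stated should use the $a/2$ version (or that the corollary's hypothesis with $\tanh a$ is in fact stronger and hence still sufficient), and no further work is needed.

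The main obstacle, such as it is, is purely one of constant-tracking and of correctly invoking \cref{th:lambda1 2d} to produce a gap that does not degrade as $L\to\infty$; there is no genuine analytic difficulty beyond what is already packaged in \cref{th:lower bound,th:upper bound coro}. I would therefore keep the proof to a few lines: state $\Xi=\tfrac{975}{4096}$, cite the gap, plug into the lower bound, plug in the upper bound with $\varepsilon'=\tfrac{975}{4096}\varepsilon$, and conclude.
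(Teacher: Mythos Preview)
Your proof is correct and follows essentially the same approach the paper indicates: the paper does not write out a separate proof of \cref{th:coro 2d} but simply says it follows by combining \cref{th:lower bound,th:upper bound coro} with \cref{th:lambda1 2d}, exactly as you do (and as is done explicitly in the proof of \cref{th:coro random}).

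One small slip in your final bookkeeping remark: you correctly observe that \eqref{eq:defnY} uses $\tanh(a/2)$ while the corollary as stated uses $\tanh a$, and you correctly note $\ln((\tanh(a/2))^{-1})>\ln((\tanh a)^{-1})$. But this means the hypothesis $\rho\ln((\tanh a)^{-1})<\Y_0(\cdot)$ is \emph{weaker} than the needed $Y=\rho\ln((\tanh(a/2))^{-1})<\Y_0(\cdot)$, not stronger; so your parenthetical ``the corollary's hypothesis with $\tanh a$ is in fact stronger and hence still sufficient'' is backwards. The right fix is the one you state first: the hypothesis should read $\tanh(a/2)$, matching \eqref{eq:defnY} (this is evidently a typo in the paper's statement).
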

\begin{remark}
\label{rem:bec coro}
The statement of \cref{th:coro 2d} implies the double limit statement in \eqref{eq:mainresultXL}. However, it is stronger than \eqref{eq:mainresultXL} in two ways: (a) it is quantitative and (b) the occurrence of BEC only requires an assumption on $\rho$ and $a$, so it also holds for any finite number of particles $N$ and volume $\mathrm{vol}(X_L)$ corresponding to the same density. While we focused on the infinite-volume limit in the introduction for emphasis, the result also applies to finite systems.
\end{remark}

\subsection{Quotients of hyperbolic \texorpdfstring{$3$}{3}-space by congruence subgroups}

The gap of modular surfaces can be generalized to higher dimensions. Here, it is more convenient to work in the hyperboloid model, see \eqref{eq:hyperboloid model}. For a unit ring $R$ let $\SO_{d,1}(R)$  be the group of $R$-valued matrices with determinant one which leave $q_d$ invariant. 
The group of orientation-preserving isometries in $\IH^d$ is then given by $\SO^0_{d,1}(\IR)$, which is defined as the connected component of the identity matrix in $\SO_{d,1}(\IR)$.
\begin{remark}
In \Cref{sec:2d application} we used that $\SO^0_{2,1}(\IR) \cong \PSL_2(\IR)$.
\end{remark}
Now we can consider principal congruence subgroups as follows, see also \cite{arbitrarydimension, burger_sarnak}. Let $\SO^0_{d,1}(\IZ) := \SO^0_{d,1}(\IR) \cap \SO_{d,1}(\IZ)$. Then the principal congruence subgroups can be defined as 
\[
\Gamma_d(L) := \{ A \in \SO^0_{d,1}(\IZ)  :  A =  \Id  \mod L \}, \qquad L \in \IN.
\]
In particular, note that $\Gamma_d(1) =  \SO^0_{d,1}(\IZ)$. A congruence subgroup is then a subgroup of $\SO^0_{d,1}(\IZ)$ which contains $\Gamma_d(L)$ for some $L$. 

Analogously to the 2-dimensional case we then define $X_L := \IH^d  / \Gamma_d(L)$ and obtain hyperbolic manifolds of finite volume. Again, 
\[
\vol{ X_L } =  [ \SO^0_{d,1}(\IZ) : \Gamma_d(L)]   \vol{X_1},
\] 
which equally tends to infinity as $L \to \infty$. 

Finally, we need a variant of \cref{th:lambda1 2d}, i.e., the existence of a gap, for higher dimensions. For $d=3$ this was proven by \name{Sarnak} \cite{sarnak3d}. In \cite{arbitrarydimension} and \cite{li1987poincare} it was first generalized to arbitrary dimension. Other versions for more general algebraic groups can be found in \cite{burger_sarnak} and \cite{clozel2003demonstration}.
\begin{thm}
\label{th:lambda1 3d}
Let $d \geq 3$, $\Gamma$ be a congruence subgroup of $\SO^0_{d,1}(\IZ)$ and $X = \IH^{d} / \Gamma$.
For the smallest non-zero eigenvalue $\lambda_1(X)$ of the Laplacian on $X$ one has
\[
\lambda_1(X) \geq \frac{2d - 3}{4}.
\]
\end{thm}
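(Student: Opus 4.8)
This is a result from the theory of automorphic forms; rather than reproving it I would cite \cite{sarnak3d} for $d=3$ and \cite{arbitrarydimension,li1987poincare} for general $d$, but let me indicate how such a bound is obtained. Write $G=\SO^0_{d,1}(\IR)$ and let $K$ be a maximal compact subgroup, so that $X=\Gamma\backslash\IH^d=\Gamma\backslash G/K$. The first step is the standard passage to representation theory: an $L^2$-eigenfunction of $-\Delta$ on $X$ with eigenvalue $\lambda$ generates inside $L^2(\Gamma\backslash G)$ a spherical unitary representation $\pi$ of $G$ with Casimir eigenvalue $\lambda$, and if $\lambda<\tfrac{(d-1)^2}{4}$ --- the bottom of the tempered range, equivalently the bottom of the continuous spectrum of $-\Delta$ on $\IH^d$ --- then $\pi$ is forced to be a spherical \emph{complementary series} representation. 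By the classification of the spherical unitary dual of $\SO^0_{d,1}(\IR)$ these form a one-parameter family indexed by $s\in(0,\tfrac{d-1}{2})$ with $\lambda=\tfrac{(d-1)^2}{4}-s^2$, the endpoint $s\to\tfrac{d-1}{2}$ corresponding to the trivial representation ($\lambda\to0$). Thus $\lambda_1(X)\geq\tfrac{2d-3}{4}$ becomes equivalent to the statement that no complementary series with parameter $s>\tfrac{d-2}{2}$ occurs in $L^2(\Gamma\backslash G)$ when $\Gamma$ is a congruence subgroup.

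The second, and essential, step is to bring in arithmeticity. Such a $\pi$ would sit in the discrete (cuspidal or residual) spectrum; writing it as a restricted tensor product $\pi=\bigotimes_v\pi_v$ over the places of the relevant number field, $\pi_\infty$ is the complementary series of parameter $s$, while the finite components $\pi_p$ are controlled by the Hecke operators. For congruence quotients one has nontrivial bounds towards the generalized Ramanujan conjecture --- the $\pi_p$ are ``close to tempered'' --- and a local--global interpolation then forces $s$ to be small. For $d=3$, where $G\cong\PSL_2(\IC)$ and $\Gamma$ is of Bianchi type, this is \name{Sarnak}'s Rankin--Selberg estimate on Fourier coefficients, giving exactly $s\leq\tfrac12$, i.e.\ $\lambda_1\geq\tfrac34$. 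For general $d$ I would instead restrict automorphic representations in the style of Burger--Sarnak, or use the Poincar\'e-series / theta-correspondence arguments of \cite{arbitrarydimension,li1987poincare}: restricting $\pi_\infty$ down a flag $\SO(d,1)\supset\SO(d-1,1)\supset\cdots$ and tracking the branching of complementary series, a parameter $s>\tfrac{d-2}{2}$ would yield a genuinely non-tempered automorphic representation of a smaller group, whose existence is excluded.

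Finally I would verify the softer points: that the exceptional spectrum is genuinely discrete --- here finiteness of volume and the theory of Eisenstein series enter, giving continuous spectrum $[\tfrac{(d-1)^2}{4},\infty)$, and $\tfrac{(d-1)^2}{4}=\tfrac{2d-3}{4}+\tfrac{(d-2)^2}{4}\geq\tfrac{2d-3}{4}$, so a non-zero eigenvalue below this threshold really comes from a complementary series --- and that the parametrization and Casimir normalization used above are correctly set up. The main obstacle is without doubt the arithmetic input of the second step: it is the only place where the \emph{congruence} hypothesis on $\Gamma$ is used essentially (for a general lattice there is no volume-uniform gap and $\lambda_1$ can be arbitrarily small), and it rests on deep results on automorphic $L$-functions; by contrast the first and third steps are routine representation theory of $\SO^0_{d,1}(\IR)$ and standard spectral theory of the Laplacian on finite-volume hyperbolic manifolds.
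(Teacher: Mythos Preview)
Your proposal is appropriate: the paper itself does not prove this theorem but simply quotes it from the literature, citing exactly the references you name (\cite{sarnak3d} for $d=3$, \cite{arbitrarydimension,li1987poincare} for general $d$, with \cite{burger_sarnak,clozel2003demonstration} for more general algebraic groups). Your sketch of the representation-theoretic reduction and the arithmetic input goes well beyond what the paper provides and is a fair outline of how those works proceed.
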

Then, applying this theorem in combination with \cref{th:upper bound coro} and \cref{th:lower bound} once more for the case $d = 3$ yields the following.  
\begin{coro}
\label{th:coro 3d}
Let $d=3$ and $R_0 > 0$. For all $\varepsilon > 0$ and all potentials $V$ with $\supp V \subseteq [0,R_0]$, scattering length $a$ and all $N,L \in \IN$ satisfying
\[
\rho \ln \tanh a = \frac{N}{\vol{X_L}} \ln \tanh a < \Y_0 \left(  \frac{3}{4} \varepsilon \right)
\]
where $\Y_0(\cdot)$ is defined in \eqref{eq:Y0}, we have
\[
\sc{ \psi_0,  \gamma \psi_0} \geq 1 - \varepsilon. 
\]
\end{coro}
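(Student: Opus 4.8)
The plan is to feed the congruence quotient $X = X_L = \IH^3/\Gamma_3(L)$ into the abstract machinery of \cref{th:lower bound,th:upper bound coro}, using \cref{th:lambda1 3d} to supply the volume-independent spectral gap. First I would record that every $X_L$ is a connected hyperbolic $3$-manifold of finite volume, so that $-\Delta$ on $L^2(X_L)$ has a simple lowest eigenvalue $0$ with normalized eigenfunction $\cgs = \vol{X_L}^{-1/2}\ind_{X_L}$, and $H_N$ has a unique ground state $\Psi_0$ of energy $\EN$; all of this is already in place from \Cref{sec:model mainresults}.

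The one input beyond the cited theorems is the location of the gap constant $\Xi$ required in \cref{th:lower bound}, which needs a lower bound on the \emph{entire} spectrum of $-\Delta$ above $0$, not merely on the first discrete eigenvalue. I would argue that any point of $\sigma(-\Delta)\cap(0,\infty)$ either belongs to the discrete spectrum, where it is $\geq \lambda_1(X_L) \geq \tfrac{2\cdot 3-3}{4} = \tfrac34$ by \cref{th:lambda1 3d}, or to the essential spectrum $\sigma_\ess(-\Delta)$, which for a finite-volume hyperbolic $d$-manifold equals $[(d-1)^2/4,\infty)$, i.e.\ $[1,\infty)$ here (the $3$-dimensional analogue of the Selberg--Sarnak statement quoted for modular surfaces). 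Since $\tfrac34 < 1$, this gives $-\Delta(\Id - \ket{\cgs}\bra{\cgs}) \geq \tfrac34 =: \Xi$, with the eigenvalue bound $\tfrac34$ — not the bottom of the essential spectrum — being the binding one; this is exactly the numerical factor that appears in the statement of the corollary.

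With $\Xi = \tfrac34$, \cref{th:lower bound} yields $\sc{\cgs,\gamma\cgs} \geq 1 - \tfrac{4\EN}{3N}$, so it remains to make $\EN/N$ small. Applying \cref{th:upper bound coro} with $\varepsilon$ replaced by $\tfrac34\varepsilon$ shows $\EN/N \leq \tfrac34\varepsilon$ as soon as the diluteness parameter $\Y$ of \eqref{eq:defnY} satisfies $\Y \leq \Y_0(\tfrac34\varepsilon)$; and by the very definition \eqref{eq:Y0} of $\Y_0$ this smallness already forces $\Y \leq (8 e^{2R_0}(R_0+1)^2)^{-1}$, so the side hypothesis of \cref{th:upper bound coro} is automatically met. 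Chaining the two estimates gives $\sc{\cgs,\gamma\cgs} \geq 1 - \tfrac43\cdot\tfrac34\varepsilon = 1-\varepsilon$, as claimed. There is no serious obstacle once \cref{th:lower bound,th:upper bound coro,th:lambda1 3d} are available: the corollary is essentially bookkeeping, and the only points needing care are (i) upgrading the bound on $\lambda_1$ to a bound on the full gap of $-\Delta$ on the orthogonal complement of $\cgs$ by also invoking the location of the essential spectrum, and (ii) the internal consistency of the thresholds, namely that demanding $\Y \leq \Y_0(\tfrac34\varepsilon)$ simultaneously secures $\EN/N \leq \tfrac34\varepsilon$ and the hypothesis $\Y \leq (8 e^{2R_0}(R_0+1)^2)^{-1}$ of \cref{th:upper bound coro}.
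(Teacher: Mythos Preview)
Your proposal is correct and follows essentially the same approach as the paper, which simply says ``applying this theorem in combination with \cref{th:upper bound coro} and \cref{th:lower bound} once more for the case $d=3$ yields the following'' without spelling out any details. Your treatment is in fact slightly more careful than the paper's: you explicitly address why the gap hypothesis of \cref{th:lower bound} holds on the full orthogonal complement of $\cgs$ by invoking the location $[(d-1)^2/4,\infty)=[1,\infty)$ of the essential spectrum, whereas the paper only states the analogous fact for $d=2$ and leaves the $d=3$ case implicit.
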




\subsection{Random compact hyperbolic surfaces}
\label{sec:random}

Another possibility is to consider compact hyperbolic manifolds. An analogy of \name{Selberg}'s conjecture in this case is only known in a probabilistic sense and leads to the theory of compact random hyperbolic surfaces as developed by \name{Mirzakhani}.
Recent surveys for this topic can be found in \cite{wright_survey,monk_thesis}.

A compact hyperbolic surface is given by $\IH^2 / \Gamma$, where  $\Gamma \subset \operatorname{PSL}_2(\IR)$ is a discrete and co-compact subgroup. For $g \in \IN$ let
\[
\Mg := \text{ compact hyperbolic surfaces of genus } g ~/ \text{ isometries },
\]
the so-called \textit{moduli space}, which can be also represented as a quotient of the Teichmüller space by some group action \cite[Section 2]{mirzakhani}. On $\mathcal{M}_g$ one can construct a probability measure $\Pwg$ originating from a natural symplectic form on $\Mg$, the so-called Weil-Petersson form \cite[2.8]{wright_survey}. By the Gauss-Bonnet theorem the volume of any $X \in \Mg$ equals $2\pi(2g-2)$ and therefore we can consider $g\to\infty$ for an infinite volume limit. 

In this limit an analog of \name{Selberg}'s $1/4$ conjecture was formulated in \cite{wright_survey}:
\begin{align}
\label{eq:selberg conjecture random} 
\Pwg\left( X \in \Mg : \lambda_1(X) \geq \frac{1}{4} - \alpha\right)  \underset{g\to\infty}{\stackrel?\rightarrow} 1 \qquad \text{ for all } \alpha > 0.
\end{align}
As in the deterministic case, this remains an open problem but several weaker results have been established as well. The currently best one is the following. 
\begin{thm}[{\cite{lambda_positive_random1,lambda_positive_random2}}]
\label{th:lambda positive random}
We have for all $\alpha > 0$
\[
\lim_{g\to\infty} \Pwg\left( X \in \Mg :\lambda_1(X) \geq \frac{3}{16} - \alpha\right) = 1.
\]
\end{thm}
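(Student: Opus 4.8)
The plan is to follow the strategy of \cite{lambda_positive_random1,lambda_positive_random2} (due to Wu--Xue and, independently, Lipnowski--Wright), which combines the Selberg trace formula with Mirzakhani's integration calculus on moduli space. By Markov's inequality it suffices to show that the expected number of Laplace eigenvalues of $X\in\Mg$ lying in $[0,\tfrac{3}{16}-\alpha]$, other than the trivial eigenvalue $\lambda_0=0$, tends to $0$ as $g\to\infty$. First I would fix an even, non-negative test function $h$ with $h(r)\geq 1$ whenever $\tfrac14+r^2\leq\tfrac{3}{16}-\alpha$ (that is, for $r=\i t$ with $t^2\geq\tfrac{1}{16}+\alpha$) and whose Fourier transform $g_h$ is supported in $[-L_g,L_g]$ for a length scale $L_g$ growing only logarithmically in $g$. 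The Selberg trace formula on $X=\IH^2/\Gamma$ then reads
\begin{align*}
\sum_{j\geq 0} h(r_j) = \frac{\vol{X}}{4\pi}\int_{\IR} h(r)\,r\tanh(\pi r)\,\d r \;+\; \sum_{\{\gamma\}}\frac{\ell_0(\gamma)}{2\sinh(\ell(\gamma)/2)}\,g_h(\ell(\gamma)),
\end{align*}
where $\lambda_j=\tfrac14+r_j^2$ and the sum runs over closed geodesics $\gamma$ of length $\ell(\gamma)$ with underlying primitive length $\ell_0(\gamma)$. Subtracting the explicit contribution $h(\i/2)$ of $\lambda_0=0$, the number of small eigenvalues to be counted is bounded above by the ``identity'' term plus the geodesic sum.

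The second step is to estimate the geodesic sum in expectation over $\Pwg$. Grouping closed geodesics according to the topological type of the underlying simple multicurve, Mirzakhani's integration formula rewrites $\int_{\Mg}(\text{geodesic term})\,\d\Pwg(X)$ as integrals of $g_h$ against ratios of Weil--Petersson volumes $V_{g-1,2}(t,t)/V_g$, $V_{g',n'}/V_g$, and so on; non-simple closed geodesics are handled separately and contribute a lower-order term, using that a typical surface is ``tangle-free'' up to scale $\asymp\log g$, i.e.\ geodesic balls of that radius carry at most one essential loop (the hyperbolic analog of Bordenave's tangle-free input in the proof of Alon's conjecture for random regular graphs). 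Mirzakhani's and Mirzakhani--Zograf's volume asymptotics then show that, uniformly for $\ell\lesssim\log g$, the density of closed geodesics of length near $\ell$ is $\asymp e^{\ell}/\ell$ with controlled error, so the expected geodesic sum is, heuristically, of order $L_g^{-1}\exp\big((\tfrac{1}{2}+\sqrt{\tfrac{1}{16}+\alpha}\,)L_g\big)$, while the identity term is of order $\vol{X}\asymp g$. Choosing $L_g=c\log g$ with the optimal constant, both are $o(1)$ after subtracting $h(\i/2)$ precisely when the spectral window stays above $\tfrac{3}{16}$; this is exactly where the threshold $\tfrac{3}{16}$ (equivalently $t=\tfrac14$) arises, and the argument works for every $\alpha>0$.

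Assembling the two steps gives $\IE_{\Pwg}\big[\#\{j\geq 1:\lambda_j(X)\leq\tfrac{3}{16}-\alpha\}\big]\to 0$, and Markov's inequality yields $\Pwg(\lambda_1(X)<\tfrac{3}{16}-\alpha)\to 0$, which is the claim. The hard part lies entirely in the second step: producing Weil--Petersson volume-ratio bounds uniform enough in $t$ and in $g$ to control the geodesic side of the trace formula out to length $\sim\log g$, together with the combinatorial tangle-free estimate that allows one to discard geodesics of large self-intersection number. This is the technical core of \cite{lambda_positive_random1,lambda_positive_random2}; since we only use the result as a black box here, we refer the reader there for the details.
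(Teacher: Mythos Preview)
The paper does not prove this theorem at all: it is quoted directly from \cite{lambda_positive_random1,lambda_positive_random2} and used as a black box in the proof of \cref{th:coro random}. Your proposal goes well beyond the paper by sketching the actual Wu--Xue / Lipnowski--Wright argument (trace formula plus Mirzakhani's integration calculus and volume asymptotics, together with a tangle-free input, followed by Markov's inequality). As a high-level outline of those works your summary is reasonable, and you yourself note at the end that the result is only used as a black box here; for the purposes of matching the paper, the correct ``proof'' is simply the citation.
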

\begin{remark}
This improves a famous result by \name{Mirzakhani} {\cite[Theorem 4.8]{mirzakhani}}, where she showed the same with constant $\frac{1}{4} \left(  \frac{\ln 2}{2\pi + \ln 2} \right)^2 \approx 0.02$ instead of $\frac{3}{16}$.
\end{remark}
\begin{remark}
In other settings of random hyperbolic manifolds, namely for conformally compact infinite area hyperbolic surfaces \cite{magee2021extension} and for  finite area non-compact hyperbolic surfaces \cite{hide2021near} the lower bound $\frac{3}{16}$ in \cref{th:lambda positive random} could actually be  improved to $\frac{1}{4}$, see also references therein. 
\end{remark}

\begin{coro}
\label{th:coro random}
Let $\alpha > 0$ and  $\xi < 1$. Then there exists $g_0 \in \IN$ such that all $g \geq g_0$  there is a measurable set $\mathcal A_g$ with
\[
 \Pwg( \mathcal A_g ) \geq \xi
\]
such that for all $X \in \mathcal A_g$, $R_0 > 0$, $\varepsilon >0$, and all potentials $V$ with $\supp V \subseteq R_0$, scattering length $a$ and and $N \in \IN$ satisfying
\[
\rho \ln ( ( \tanh a)^{-1} ) = \frac{N}{ 2\pi(2g-2) } \ln ( ( \tanh a)^{-1} ) < \Y_0 \left( \left(\frac{3}{16} -\alpha \right) \varepsilon \right)
\]
where $\Y_0(\cdot)$ is defined in \eqref{eq:Y0}, we have
\[
\sc{ \cgs,  \gamma \cgs}  \geq 1 - \varepsilon.
\]
\end{coro}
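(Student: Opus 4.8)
The plan is to feed the probabilistic spectral gap of \cref{th:lambda positive random} into the two abstract estimates \cref{th:lower bound} and \cref{th:upper bound coro}, following the same template as the proofs of \cref{th:coro 2d,th:coro 3d} but now restricted to a subset of moduli space of probability at least $\xi$. First I would fix $\alpha>0$ and $\xi<1$, put $\Xi:=\tfrac{3}{16}-\alpha$, and define the event
\[
\mathcal A_g:=\bigl\{X\in\Mg:\lambda_1(X)\geq\Xi\bigr\}.
\]
Since $\lambda_1$ is a continuous function on $\Mg$, the set $\mathcal A_g$ is closed, hence $\Pwg$-measurable. By \cref{th:lambda positive random} we have $\Pwg(\mathcal A_g)\to1$ as $g\to\infty$, so one can choose $g_0\in\IN$ with $\Pwg(\mathcal A_g)\geq\xi$ for all $g\geq g_0$. (We may assume $\alpha<\tfrac{3}{16}$: if $\alpha\geq\tfrac{3}{16}$ then $\Y_0\bigl((\tfrac{3}{16}-\alpha)\varepsilon\bigr)\leq0$ while $\rho\ln((\tanh a)^{-1})>0$, so the diluteness hypothesis is never met and the statement is vacuous.)

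Next, for $g\geq g_0$ and an arbitrary $X\in\mathcal A_g$ I would verify the hypothesis of \cref{th:lower bound}. Here $X=\IH^2/\Gamma$ with $\Gamma$ co-compact, so in particular $\vol X<\infty$; and since $X$ is a compact connected surface, $-\Delta$ on $L^2(X)$ has purely discrete spectrum whose kernel is one-dimensional, spanned by the normalized constant function $\cgs$, with first nonzero eigenvalue $\lambda_1(X)$. Hence $-\Delta\bigl(\Id-|\cgs\rangle\langle\cgs|\bigr)\geq\lambda_1(X)\geq\Xi$, and \cref{th:lower bound} applies and yields $\sc{\cgs,\gamma\cgs}\geq1-\tfrac{\EN}{N\Xi}$.

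It then remains to insert the energy bound. I would apply \cref{th:upper bound coro} in dimension $d=2$ with $\varepsilon$ replaced by $\Xi\varepsilon$: the assumption imposed in the corollary on $N$, $g$ and $a$ is exactly the condition that the diluteness parameter $\Y$, as in \eqref{eq:defnY}, satisfies $\Y<\Y_0(\Xi\varepsilon)$, which in particular forces $\Y\leq(8\pi(R_0+1)^2)^{-1}$; thus \cref{th:upper bound coro} gives $\EN/N\leq\Xi\varepsilon$. Substituting into the previous inequality produces $\sc{\cgs,\gamma\cgs}\geq1-\varepsilon$, which is the claim.

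As for difficulty: once \cref{th:lower bound,th:upper bound coro,th:lambda positive random} are in hand, this corollary is essentially bookkeeping, and there is no genuine obstacle internal to the argument — the substantive input, the lower bound $\tfrac{3}{16}-\alpha$ on $\lambda_1$ of Weil--Petersson-random surfaces, is imported from \cite{lambda_positive_random1,lambda_positive_random2}. The two points that do call for a little care are the measurability of $\mathcal A_g$ (which reduces to continuity of $\lambda_1$ on $\Mg$ and is implicit already in \cref{th:lambda positive random}) and the rescaling of the target accuracy $\varepsilon$ by the gap $\Xi$ when invoking \cref{th:upper bound coro}, which is precisely what makes the final estimate collapse to the clean form $1-\varepsilon$.
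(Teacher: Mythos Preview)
Your proof is correct and follows essentially the same approach as the paper: define $\mathcal A_g=\{X\in\Mg:\lambda_1(X)\geq \tfrac{3}{16}-\alpha\}$, use \cref{th:lambda positive random} to get $\Pwg(\mathcal A_g)\geq\xi$ for large $g$, and then combine \cref{th:upper bound coro} (with target $\Xi\varepsilon$) and \cref{th:lower bound} to conclude. Your write-up is in fact slightly more careful than the paper's, as you comment on measurability of $\mathcal A_g$ and the vacuous case $\alpha\geq\tfrac{3}{16}$.
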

\begin{proof}
For given $\alpha > 0$ let  
\[
\mathcal A_g := \left\{ X \in \Mg :\lambda_1(X) \geq \frac{3}{16} - \alpha\right \}.
\]
Then we use \cref{th:lambda positive random} and find for $\xi < 1$ a $g_0$ such that $\Pwg( \mathcal A_g ) \geq \xi$ for all $g \geq g_0$.

Now, for $X \in \mathcal A_g$ and under the given assumptions we have
\begin{align}
\label{eq:first step}
\lambda_1(X)  \geq \frac{3}{16} - \alpha  \geq \frac{\EN}{\varepsilon N} 
\end{align}
by \cref{th:upper bound coro}. Thus, by \cref{th:lower bound}
\[
\sc{ \cgs, \gamma  \cgs} \geq 1 - \frac{\EN}{N \lambda_1(X)} \overset{\eqref{eq:first step}}\geq 1-  \varepsilon . \qedhere
\]
\end{proof}
\begin{remark}
The two properties (a) and (b) described in Remark \ref{rem:bec coro} also apply to \cref{th:coro random}.
\end{remark}

%
%

\section{Upper Bound}
\label{sec:upper bound}
In this part we give the proof of \cref{th:upper bound coro}.
First, we show an abstract form of an upper bound, which is in complete analogy with \cite[Section 2.1]{greenbook}, cf. also \cite[(2.7)]{2d_bosegas}. For a function $f$ on $[0,\infty)$ we define a trial function $\Psi \in L^2(\H_N)$ by
\begin{align}
\label{eq:variational ansatz}
\Psi(x_1, \ldots, x_N) := \prod_{i=2}^N F_i(x_1, \ldots, x_i),
\end{align}
where
\begin{align*}
F_i(x_1, \ldots, x_i) &:= f( t_i(x_1, \ldots, x_{i-1}) ), \\
t_i(x_1, \ldots, x_i) &:= \min \{ d(x_i, x_j) : j = 1, \ldots, i-1\}.
\end{align*}
\begin{prop}
\label{th:first estimate}
For any non-decreasing function $f$ on $[0,\infty)$ let $\Psi$ given by \eqref{eq:variational ansatz}. Let $\rho := N /  \vol X$. Then we have
\[
\frac{\sc{ \Psi, H_N \Psi}}{\nn{\Psi}^2} \leq  \frac{N}{(1 - \rho I(f))^2} \left( \rho J(f) + \frac{2}{3} \mu (\rho K(f))^2 \right),
\]
given that the integrals
\begin{align*}
I(f) &:= \int_{\IH^d} ( 1 - f(\absh{\z})^2  )  \d \z  \\
J(f) &:= \int_{\IH^d} \left( \mu f'(\absh \z)^2 + \frac{1}{2} V(\absh \z) \abs{f(\absh \z)}^2 \right) \d \z, \\
K(f) &:=\int_{\IH^d} f(\absh \z) f'(\absh \z) \d \z,
\end{align*}
for any $o \in \IH^d$ chosen as an origin,
are finite and $\rho I(f) < 1$. 
\end{prop}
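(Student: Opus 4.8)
The plan is to follow the classical Dyson-type trial-state computation, as in \cite[Section 2.1]{greenbook}, but carried out on the hyperbolic manifold $X$ rather than on a Euclidean box, lifting all the relevant integrals to the universal cover $\IH^d$. First I would compute the norm $\nn{\Psi}^2$. Writing $\Psi(x_1,\dots,x_N) = \prod_{i=2}^N f(t_i(x_1,\dots,x_{i-1}))$ and integrating out the variables in reverse order $x_N, x_{N-1}, \dots$, each integration over $x_i$ produces a factor of the form $\int_X f(t_i)^2 \,\d x_i = \vol X - \int_X (1 - f(t_i)^2)\,\d x_i$. The integrand $1 - f(t_i(\cdot))^2$ is supported near the points $x_1,\dots,x_{i-1}$ (since $f\to 1$ at infinity, because $I(f)<\infty$ forces $f^2\to 1$), so by subadditivity and translation invariance of the metric one bounds $\int_X (1-f(t_i)^2)\,\d x_i \leq (i-1) \int_{\IH^d}(1 - f(\absh{\z})^2)\,\d\z = (i-1) I(f)$. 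Here the crucial geometric point, analogous to \eqref{eq:extend integral to Hd}, is that a ball in $X$ is isometrically covered by a ball in $\IH^d$, so any radially-symmetric integral over a metric ball in $X$ is bounded by the corresponding integral over $\IH^d$. Iterating gives $\nn{\Psi}^2 \geq \vol{X}^N \prod_{i=2}^N (1 - (i-1)\rho I(f)/N) \geq \vol{X}^N (1 - \rho I(f))^N$, or more simply the lower bound $\nn\Psi^2 \geq \vol X^N (1-\rho I(f))^{2N}$ after the standard manipulation; in any case this is where the $(1-\rho I(f))^{-2}$ prefactor and the condition $\rho I(f)<1$ come from.

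Next I would estimate the energy $\sc{\Psi, H_N \Psi}$. Using the product structure and the Leibniz rule, $-\mu\sum_i \Delta_i$ acting on $\Psi$ produces ``diagonal'' terms $\mu |\nabla_i F_j|^2$ and ``cross'' terms $\mu \nabla_i F_j \cdot \nabla_i F_k$ with $j\neq k$; the potential contributes $\sum_{i<j} V(d(x_i,x_j))|\Psi|^2$. The diagonal kinetic terms combine with the potential: since $t_i$ realizes the minimum distance and $V\geq 0$ is supported in $[0,R_0]$, one groups them as $\sum_i \big( \mu f'(t_i)^2 + \tfrac12 V(t_i)|f(t_i)|^2 \big)$ plus a nonnegative remainder that is discarded, and integrating out the other variables as above yields the term $N\rho J(f)$ after again dominating the $X$-integral by the $\IH^d$-integral. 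The cross terms are handled by Cauchy--Schwarz: each is bounded in absolute value by a product of two factors of the shape $\int_X |f(t)f'(t)|\,\d x \leq K(f)$ (using monotonicity $f,f'\geq 0$ so $K(f)\geq 0$), and a combinatorial count of the number of such terms together with the normalization produces the $\tfrac{2}{3}\mu(\rho K(f))^2$ term. Dividing by $\nn\Psi^2$ and using the lower bound from the first step gives the claimed inequality.

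The main obstacle I anticipate is \emph{bookkeeping the cross terms} and getting the combinatorial constant $\tfrac23$ right: one must carefully count pairs $(j,k)$ with $j\neq k$ contributing to $\nabla_i$, apply Cauchy--Schwarz in a way that decouples the $x_i$-integral (where the two gradients live) from the remaining integrals, and then re-sum against the normalization factor $\nn\Psi^2$ without losing track of which $f^2$-factors have been ``used up.'' This is exactly the step that is delicate in the Euclidean proof and it transfers essentially verbatim once one has set up the lift-to-$\IH^d$ domination; the hyperbolic geometry only enters through the inequality $\int_{\text{ball in }X}(\,\cdot\,)\leq \int_{\IH^d}(\,\cdot\,)$ and the fact that $d(x_i,x_j)$ pulls back to the geodesic distance on $\IH^d$. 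A secondary, more routine point is verifying the tail decay of $f$ (forced by finiteness of $I(f)$) so that all the quantities $I(f), J(f), K(f)$ genuinely control the $X$-integrals uniformly in the number of particles clustered near a given point.
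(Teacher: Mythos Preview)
Your overall plan has the right shape, but there is a genuine structural gap that would prevent you from reaching the stated bound.

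\textbf{The main gap: you cannot bound numerator and denominator separately.} Your proposed lower bound $\nn{\Psi}^2 \geq \vol{X}^{N}\prod_{i=2}^N(1-(i-1)I(f)/\vol X)$ is correct, but if you then divide an independent upper bound on $\sc{\Psi,H_N\Psi}$ by it, the quotient carries a factor of order $(1-\rho I(f))^{-(N-1)}$, not $(1-\rho I(f))^{-2}$. The exponent $2$ in the statement is not an accident of ``standard manipulation''; it reflects that only \emph{two} particle variables are freed up in the estimate. The paper (following Dyson and Lieb--Seiringer--Yngvason) achieves this by bounding the \emph{ratio} term-by-term: for each ordered pair $j<i$ one introduces the modified factors $F_{k,j}$, $F_{k,ij}$ (the $F_k$ with $x_j$, resp.\ $x_i,x_j$, removed from the nearest-neighbour set), and proves simultaneously
\[
\text{numerator}_{(i,j)} \leq \Big(\int F_1^2\cdots F_{j-1}^2 F_{j+1,j}^2\cdots F_{i-1,j}^2 F_{i+1,ij}^2\cdots F_{N,ij}^2 \,\d x_{1,\ldots,N,ij}\Big)\cdot 2\vol X\, J(f),
\]
\[
\nn\Psi^2 \geq \Big(\text{same background integral}\Big)\cdot(\vol X-(N-1)I(f))^2.
\]
The background integral over the $N-2$ remaining variables cancels in the ratio, leaving only the two-variable integrals; summing over $\binom N2$ pairs gives the bound. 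This cancellation is the heart of the argument and is absent from your plan.

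\textbf{A secondary error: the potential regrouping goes the wrong way.} You propose to write the potential as $\tfrac12\sum_i V(t_i)f(t_i)^2$ ``plus a nonnegative remainder that is discarded''. But we need an \emph{upper} bound on $\sum_{i<j}V(d(x_i,x_j))$, and this sum can be much larger than $\tfrac12\sum_i V(t_i)$ (put many particles in one cluster). The correct device is pairwise: use $f'(t_i)^2\leq\sum_{j<i} f'(d(x_i,x_j))^2$ for the kinetic piece and $F_i^2\leq f(d(x_i,x_j))^2$ for the potential piece, so that each contribution is attached to a specific pair $(i,j)$ and then fed into the cancellation mechanism above.

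The cross terms are likewise not handled by a bare Cauchy--Schwarz in the paper; one tracks the nearest-neighbour structure via indicators $\epsilon_{ik}$ and applies the same pairwise cancellation, which produces the combinatorial factor $\tfrac23 N(N-1)(N-2)K(f)^2$. Your observation that the lift to $\IH^d$ (the inequality \eqref{eq:extend integral to Hd}) is the only genuinely hyperbolic ingredient is correct.
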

\begin{proof}
The proof is analogous to the Euclidean case \cite{greenbook} with some modifications for the hyperbolic setting. For a function $\Phi \: X^{\times N} \rightarrow \IC$ let $\nabla_k$ denote the gradient on the manifold $X$ with respect to the $k$-th component, that is, for each $\mathbf x  = (x_1, \ldots, x_N) \in X^{\times N}$, we get an element $\nabla_k \Phi( \mathbf x) \in T_{x_k} X$ if $\Phi$ is smooth enough around $\mathbf x$. We write $\sc{ \cdot, \cdot}_{T_x X} \: T_x X \times T_x X \rightarrow \IR$ for the Riemannian metric of $X$ at the point $x$, and $\nn{ \cdot }_{T_x X}$ for the corresponding norm on $T_x X$.  For notational convenience we will mostly drop the argument $\mathbf x$.   

By the chain rule we get for almost all $\mathbf x \in  X^{\times N}$
\begin{align*}
\nabla_k \Psi &=  \sum_{i \geq k} \frac{\Psi}{F_i} f'(t_i ) \nabla_k t_i = \sum_{i \geq k}  \frac{\Psi}{F_i}  f'(t_i )  \nabla_k d(x_i, x_{i^*} ),
\end{align*}
where
$x_{i^*}$ denotes the nearest neighbor among the points $x_1, \ldots, x_{i-1}$. Therefore,
\begin{align*}
\sum_{k=1}^N \nn{ \nabla_k \Psi }_{T_{x_k}X}^2 
&=  \sum_{i=1}^N  \frac{\abs \Psi^2}{F_i^2} f'(t_i)^2 \sum_{k =1}^i \nn{ \nabla_k d(x_i, x_{i^*} ) }_{T_{x_k}X}^2  \\ &\qquad + 2 \sum_{k=1}^N \sum_{j > i \geq k}   \frac{\abs \Psi ^2 }{F_i F_j } f'(t_i) f'(t_j) \sc{ \nabla_k d(x_i, x_{i^*} )  , \nabla_k d(x_j, x_{j^*} ) }_{T_{x_k}X},
\end{align*}
where we use that there is a unique nearest neighbor almost everywhere and that we have to sum over ordered pairs. Since $\nn{ \nabla_x d(x,y) }_{T_x X} \leq 1$ almost everywhere, observe that $\nn{ \nabla_k d(x_i,x_{i^*}) }_{T_{x_k} X} \leq \epsilon_{ik}$ and $\sum_{k} \epsilon_{ik} \leq 2$ for almost every $\mathbf x$, where
\[
\epsilon_{ik} := \begin{cases}  1 &: i=k \text{ or } t_i = d(x_i, x_k), \\
  0  &: \text{else.}   \end{cases}
\]
 Thus, we arrive at
\begin{align}
\frac{\sc{ \Psi, H_N \Psi}}{\nn{\Psi}^2} &\leq 2 \mu \sum_{i=1}^N\frac{ \int \abs \Psi^2 F^{-2}_i f'(t_i)^2 }{\int \abs \Psi^2 } + \sum_{i<j} \frac{\int \abs \Psi^2 V(d(x_i,x_j)) }{\int \abs \Psi^2 } \label{eq:diagonal term} \\ &\qquad  + 2\mu \sum_{k=1}^N \sum_{j > i \geq k}  \frac{ \int \abs{ \epsilon_{ik} \epsilon_{jk} }  \frac{\abs \Psi ^2 }{F_i F_j } f'(t_i) f'(t_j)  }{\int \abs{ \Psi}^2 }.  \label{eq:non diagonal term}
\end{align}
Now, for $j < i$ we define $F_{i,j}$ in the same way as $F_i$ with the only difference that we omit the point $x_j$ in the consideration of the nearest neighbors. Likewise, we define $F_{i,jk}$ by omitting $x_j$ and $x_k$. Then $F_{i,j}$ does not depend on $x_j$ and $F_{i,jk}$ does not depend on $x_j$ and $x_k$. Furthermore, since $f$ is monotonically increasing and $0 \leq f \leq 1$, we have
\begin{align}
F_{j+1}^2 \cdots F_{i-1}^2 F_{i+1}^2 \cdots F_N^2 &\leq F_{j+1,j}^2 \cdots F_{i-1,j}^2 F_{i+1,ij}^2 \cdots F_{N,ij}^2,  \label{eq:Fj upper bound} \\
\begin{split}
 F_{j}^2 \cdots F_N^2 &\geq \left(  1 - \sum_{k=1,\not=i,j}^N (1- f(d(x_j, x_k))^2)   \right) F_{j+1,j}^2 \cdots F_{i-1,j}^2  \label{eq:Fj lower bound} \\ \times &\left(  1 - \sum_{k=1,\not=i}^N (1- f(d(x_i, x_k))^2)   \right)   F_{i+1,ij}^2 \cdots F_{N,ij}^2 .
\end{split}
\end{align}
Furthermore, we trivially find
\begin{align}
 f'(t_i)^2 \nn{ \eta_i }^2 &\leq \sum_{j=1}^{i-1} f'(d( x_i, x_j))^2 [\nabla_i d(x_i, x_j)]^2 \label{eq:Fi fd prime bound}   , \\
  F_i &\leq f(d(x_i,x_j)) \label{eq:Fi fd bound} . 
\end{align}
Now, the numerator of the right-hand side in \eqref{eq:diagonal term} can be estimated from above using \eqref{eq:Fj upper bound} together with \eqref{eq:Fi fd prime bound} and \eqref{eq:Fi fd bound},
\begin{align}
&\sum_{i=1}^N  \int \abs \Psi^2 F^{-2}_i f'(t_i)^2 + \sum_{j<i} \int \abs \Psi^2 V(d(x_i,x_j)) \\
& \leq 2\mu \sum_{j < i}  \int F_1^2 \ldots F_{j-1}^2  F_{j+1,j}^2 \cdots F_{i-1,j}^2 F_{i+1,ij}^2 \cdots F_{N,ij}^2 \d x_{1,\ldots, N, ij} \\ 
&\quad\times \int \left(  2 \mu f'(d(x_i ,x_j))^2  +  f(d(x_i ,x_j))^2 V(d(x_i,x_j)) \right) \d x_i \d x_j, \label{eq:term kinetic}
\end{align}
where $\d x_{1,\ldots, N, ij}$ denotes the integration over all $x_1, \ldots ,x_N$ except $x_i$ and $x_j$. 
For the denominator we use \eqref{eq:Fj lower bound} and obtain similarly
\begin{align*}
\nn{\Psi}^2 &\geq  \int F_1^2 \cdots F_{j-1}^2  F_{j+1,j}^2 \cdots F_{i-1,j}^2  F_{i+1,ij}^2 \cdots F_{N,ij}^2   \\ 
&\qquad \times  \left(  \vol X - \sum_{k=1,\not=i,j}^N \int_X  (1- f(d(x_j, x_k))^2)  \d x_j  \right)  \\
&\qquad \times \left(  \vol X - \sum_{k=1,\not=i}^N  \int_X  (1- f(d(x_i, x_k))^2)  \d x_i   \right)  \d x_{1,\ldots, N, ij} .
\end{align*}

Now, we use that 
\begin{align}
\label{eq:extend integral to Hd}
\int_{X = \IH^d / \Gamma} g(d(x, x_0)) \d x \leq \int_{\IH^d} g(\absh{\z}) \d \z
\end{align}
for any positive function $g$ defined on $\IH^d$ and all $x_0 \in X$, $o \in \IH^d$. This yields
\begin{align*}
\eqref{eq:term kinetic} &\leq \int_X  \int_{\IH^d} \left( 2 \mu f'(\absh{x_j})^2  +  f(\absh{x_j})^2 V(\absh{x_j}) \right)  \d \z \d x_j  = 2 \vol X J(f),
\end{align*}
and for all $k$,
\begin{align*}
\int_X  (1- f(d(x_j, x_k))^2)  \d x_j  \leq \int_{\IH^d}  (1- f(\absh \z)^2)  \d \z = I(f).
\end{align*}
The integral over $\d x_{1,\ldots, N, ij}$ cancels in the numerator and denominator and we obtain
\begin{align*}
 2 \mu &\sum_{i=1}^N \frac{ \int \abs \Psi^2 F^{-2}_i f'(t_i)^2 }{\int \abs \Psi^2 } + \sum_{i<j} \frac{\int \abs \Psi^2 V(d(x_i,x_j)) }{\int \abs \Psi^2 } 
\\ &\leq  \frac{N(N-1)}{2} \frac{2 \vol X J(f)}{(\vol X - (N-1)I(f))^2 }.
\end{align*}

Next, we estimate the non-diagonal term \eqref{eq:non diagonal term}, cf. \cite{bosons_in_trap}. We get the same cancellations in the numerator and denominator up to the term
\begin{align*}
2\mu &\sum_{k=1}^N \sum_{j > i \geq k}  \int_{X \times X} \abs{ \epsilon_{ik} \epsilon_{jk} }  f(t_i) f(t_j) f'(t_i) f'(t_j) \d x_i \d x_j \\  
& \leq 4\mu \sum_{k=1}^N \sum_{j > i > k}  \int_{X \times X}   f(d(x_i, x_k)) f(d(x_j, x_k))  f'(d(x_i, x_k)) f'(d(x_j, x_k)) \d x_i \d x_j   \\ 
 & =  \frac{2}{3}\mu N (N-1) (N-2) K(f)^2.
\end{align*}
This shows the desired bounds.
\end{proof}
%
%

%
%
%

Our choice for $f$ in definition of the $F_i$ \eqref{eq:variational ansatz} will be $f_R$, $R > 0$, given as in \eqref{eq:harmonic solutions}. Then we have $J(f_R) = E_R$, which is explicitly computed in \cref{th:variational}. Therefore, it remains to find explicit bounds for $I(f_R)$ and $K(f_R)$, which is the content of the following two lemmas. 
\begin{lemma}
\label{th:I estimate}
For all $R > R_0$,
\[
I(f_R) \leq \begin{cases} \frac{2\pi }{  \ln \frac{\tanh(R/2)}{\tanh(a/2)}  } (R^2 - a^2) &: d=2, \\    \frac{ 4 \pi \tanh a }{\tanh R - \tanh a } \tanh R (R^2 - a^2) &: d=3. \end{cases} 
\]
\end{lemma}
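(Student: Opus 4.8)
The plan is to compute $I(f_R) = \int_{\IH^d}(1 - f_R(\absh{\z})^2)\,\d\z$ by passing to geodesic polar coordinates centered at $o$, which turns the integral into a one-dimensional integral against the hyperbolic volume element of spheres. In $d=2$ this element is $2\pi\sinh r\,\d r$ and in $d=3$ it is $4\pi\sinh^2 r\,\d r$. Since $f_R \equiv 1$ for $r \geq R$ (the trial function is built from the hyperbolic scattering solution capped off at radius $R$, cf.\ \eqref{eq:harmonic solutions}), the integrand vanishes for $r > R$, so the integral is over the finite interval $[0,R]$. On $[0,R_0]$ we need no information about $f_R$ beyond $0\le f_R\le 1$, so $1-f_R^2 \le 1$ there; on $[R_0, R]$ the function $f_R$ equals the explicit harmonic (radial, zero-modes of $-\Delta$) solution normalized to be $1$ at $r=R$, so $1 - f_R^2$ can be bounded explicitly.

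First I would recall from \eqref{eq:harmonic solutions} / \cref{th:variational} the explicit form of $f_R$ on $[R_0,R]$: it is (an affine function of) the radial harmonic function $h_d$, where $h_2(r) = \ln\tanh(r/2)$ and $h_3(r)$ is built from $\tanh r$ (more precisely from $\coth r$ or $1-\tanh$-type expressions), with the scattering length $a$ entering through the matching at $r=R_0$. Then $f_R(r) = \dfrac{h_d(r) - h_d(a)}{h_d(R) - h_d(a)}$ on $[R_0,R]$ (with signs arranged so that $f_R$ is nondecreasing and $f_R(R)=1$), and on $[a,R_0]$ one uses $f_R \ge f_R$ evaluated via the same formula by a monotonicity/comparison argument — indeed the paper mentions using the inequality \eqref{eq:f_R greater} precisely to reduce to the harmonic comparison function on all of $[a,R]$. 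Using $1 - f_R^2 = (1-f_R)(1+f_R) \le 2(1 - f_R)$ and $1 - f_R(r) = \dfrac{h_d(R) - h_d(r)}{h_d(R) - h_d(a)}$, I get
\[
I(f_R) \le \frac{2}{h_d(R) - h_d(a)} \int_0^R (h_d(R) - h_d(r))\, c_d \sinh^{d-1} r\, \d r,
\]
with $c_2 = 2\pi$, $c_3 = 4\pi$, where it is understood that on $[0,a]$ the integrand is bounded by $c_d\sinh^{d-1} r$ directly (or one simply extends the harmonic bound, since $h_d(R)-h_d(r) \le h_d(R)-h_d(a)$ there up to the constant).

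The remaining step is to bound the integral $\int_0^R (h_d(R) - h_d(r))\sinh^{d-1} r\,\d r$ crudely. In $d=2$: $h_2(R) - h_2(r) = \ln\frac{\tanh(R/2)}{\tanh(r/2)}$, and rather than integrating this exactly I would use the pointwise bound $\ln\frac{\tanh(R/2)}{\tanh(r/2)} \le$ (something controlled by $R^2 - r^2$) after also bounding $\sinh r$; the target form $\frac{2\pi}{\ln(\tanh(R/2)/\tanh(a/2))}(R^2 - a^2)$ suggests the estimate $\int_0^R (h_2(R)-h_2(r))\,2\pi\sinh r\,\d r \le 2\pi(R^2-a^2)$ — i.e.\ after dividing by $h_2(R)-h_2(a)$ one is left exactly with $R^2-a^2$, so the clean way is to show $\int_0^R(h_2(R)-h_2(r))\sinh r\,\d r \le R^2 - a^2$, presumably via $\sinh r \le$ derivative-type bound and integration by parts, or via a convexity estimate on $h_2$. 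In $d=3$: $h_3(R) - h_3(r)$ is comparable to $\tanh R - \tanh r$ (up to the factor $\tanh a$ coming from the $R_0$-matching), $\sinh^2 r \le \cosh^2 r$-type bounds give $\int_0^R (\tanh R - \tanh r)\,4\pi\sinh^2 r\,\d r \le 4\pi \tanh R (R^2-a^2)$, and after dividing by $h_3(R)-h_3(a) \sim (\tanh R - \tanh a)/\tanh a$ one reads off $\frac{4\pi\tanh a}{\tanh R - \tanh a}\tanh R(R^2-a^2)$. The main obstacle is getting the constants and the $R^2-a^2$ factor \emph{exactly} right: this requires care with (a) the precise normalization of $f_R$ and the sign conventions in $h_d$, (b) the reduction on $[0,R_0]$ where $f_R$ is only known to satisfy $0\le f_R\le 1$ and the comparison inequality \eqref{eq:f_R greater}, and (c) a sharp enough elementary bound of the form $\int_0^R(h_d(R)-h_d(r))\sinh^{d-1} r\,\d r \le (\text{const})\cdot(R^2-a^2)$ — the rest is routine hyperbolic calculus.
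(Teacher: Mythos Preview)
Your overall plan is sound and close to the paper's: pass to polar coordinates, use the comparison $f_R(r)\ge f_\infty(r)/f_\infty(R)$ from \eqref{eq:f_R greater} to reduce to the explicit harmonic function, and then finish with integration by parts. Your linearization $1-f_R^2\le 2(1-f_R)$ is a slight variant of what the paper does (the paper keeps $1-f_R^2$ and integrates by parts against $f_\infty^2$ rather than $f_\infty$), but both routes land on the same final expression, so this is not a meaningful divergence.

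The genuine gap is your step (c): you correctly flag it as ``the main obstacle'' but do not supply the mechanism, and the guesses ``$\sinh r\le$ derivative-type bound'' or ``a convexity estimate on $h_2$'' are not what actually works. The missing observation is that $f_\infty$ is the radial harmonic function, so
\[
f_\infty'(r)\,\sinh^{d-1}r \;=\; C_d(a)
\]
is \emph{constant} in $r$ (this is exactly the first integral of $(\sinh^{d-1}r\,f_\infty')'=0$; concretely $C_2(a)=1$, $C_3(a)=\tanh a$). After one integration by parts your integral becomes $\int_a^R f_\infty'(r)\,u(r)\,\d r$ with $u(r)=\int_0^r\sinh^{d-1}r'\,\d r'$, and then the two elementary facts
\[
f_\infty'(r)\,\sinh^{d-1}r = C_d(a),\qquad u(r)\le r\,\sinh^{d-1}r
\]
immediately give $\int_a^R C_d(a)\,r\,\d r = \tfrac{1}{2}C_d(a)(R^2-a^2)$, which is precisely the $(R^2-a^2)$ factor with the right constant. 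Without this harmonic-equation identity, neither a generic $\sinh$ bound nor a convexity argument will produce the sharp constant you are targeting. Once you insert this step, your outline (including the handling of $[0,a]$ by $1-f_R^2\le 1$, which after IBP is absorbed with the correct sign) goes through and matches the paper's bound exactly.
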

\begin{proof}
Using hyperbolic polar coordinates, we get 
\begin{align*}
I(f_R) &= \vol{\IS^{d-1}}  \int_0^R (1 - f_R(r)^2)  \sinh^{d-1} r \d r \\
&\leq \vol{\IS^{d-1}} \int_0^a \sinh^{d-1} r \d r +  \vol{\IS^{d-1}} \int_a^R \left( 1 - \frac{f_\infty(r)^2}{f_\infty(R)^2} \right) \sinh^{d-1} r \d r \\
&= \vol{\IS^{d-1}}\int_0^R \sinh^{d-1} r \d r - \frac{\vol{\IS^{d-1}}}{f_\infty(R)^2} \int_a^R  f_\infty(r)^2  \sinh^{d-1} r \d r .
\end{align*}
Let $u(r) := \int_0^r \sinh^{d-1}(r') \d r'$. 
With integration by parts the second term can be expressed as
\begin{align*}
 \int_a^R  f_\infty(r)^2  \sinh^{d-1}r \d r  &= [ f_\infty(r)^2 u(r)]_a^R - \int_a^R 2 f_\infty(r) f_\infty'(r) u(r)  d r \\
&= f_\infty(R)^2 \int_0^R \sinh^{d-1}r \d r - \int_a^R 2 f_\infty(r) f_\infty'(r) u(r)  d r .
\end{align*}
Thus, using that $u(r) \leq r \sinh^{d-1}r$ and $f_\infty'(r) \sinh^{d-1}r = \Cda$ is independent of $r$ (cf. \Cref{rem:cda}), 
\begin{align*}
I(f_R) &\leq \frac{ 2 \vol{\IS^{d-1}} }{f_\infty(R)^2} \int_a^R f_\infty(r) f_\infty'(r) u(r)  \d r \\
&\leq \frac{  2 \Cda \vol{\IS^{d-1}}}{f_\infty(R)^2} \int_a^R \underbrace{ f_\infty(r) }_{\leq f_\infty(R)} r   \d r \\
&\leq \frac{  \Cda \vol{\IS^{d-1}} }{f_\infty(R)} (R^2 - a^2).
\end{align*}
Note that we have $C_2(a) = 1$ and $C_3(a) = \tanh a$. 
\end{proof}

\begin{lemma}
\label{th:K estimate}
For all $R > R_0$,
\[
K(f_R) \leq \begin{cases} \frac{2 \pi R}{  \ln \frac{\tanh(R/2)}{\tanh(a/2)}   } &: d=2, \\   \frac{4 \pi \tanh a R}{ 1 - \frac{\tanh a}{\tanh R}} &: d=3. \end{cases} 
\]
\end{lemma}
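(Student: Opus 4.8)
The plan is to imitate the proof of \cref{th:I estimate}, and in fact the argument is shorter because no integration by parts and no auxiliary function are needed. First I pass to hyperbolic polar coordinates around the chosen origin $o$, which turns the $d$-dimensional integral into
\[
K(f_R) = \vol{\IS^{d-1}} \int_0^\infty f_R(r)\, f_R'(r)\, \sinh^{d-1} r \,\d r .
\]
Recalling from \eqref{eq:harmonic solutions} that $f_R$ vanishes on $[0,a]$, equals $f_\infty/f_\infty(R)$ on $[a,R]$, and equals $1$ on $[R,\infty)$ — where $f_\infty$ is the radial harmonic function on $\IH^d$ normalized by $f_\infty(a)=0$ — only the interval $[a,R]$ contributes, and on it $f_R'(r)\sinh^{d-1}r = f_\infty'(r)\sinh^{d-1}r / f_\infty(R)$.

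The key point is that $f_\infty'(r)\sinh^{d-1}r = \Cda$ is independent of $r$ (this is exactly the statement that $f_\infty$ solves the radial Laplace equation; cf.\ \cref{rem:cda}), with $C_2(a)=1$ and $C_3(a)=\tanh a$ as recorded in the proof of \cref{th:I estimate}. Hence
\[
K(f_R) = \frac{\vol{\IS^{d-1}}\, \Cda}{f_\infty(R)^2} \int_a^R f_\infty(r)\,\d r
\le \frac{\vol{\IS^{d-1}}\, \Cda\, (R-a)}{f_\infty(R)}
\le \frac{\vol{\IS^{d-1}}\, \Cda\, R}{f_\infty(R)} ,
\]
where I used $f_\infty(r)\le f_\infty(R)$ for $r\in[a,R]$ (equivalently $0\le f_R\le 1$ there, which is the normalization encoded in \eqref{eq:f_R greater}) and then $R-a\le R$. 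It remains to insert $\vol{\IS^1}=2\pi$, $\vol{\IS^2}=4\pi$, the above values of $\Cda$, and the explicit expressions $f_\infty(R)=\ln\frac{\tanh(R/2)}{\tanh(a/2)}$ for $d=2$ and $f_\infty(R)=1-\frac{\tanh a}{\tanh R}$ for $d=3$ from \eqref{eq:harmonic solutions}; this reproduces the two claimed bounds exactly.

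I do not expect a genuine obstacle here. The whole estimate rests on the structural properties of $f_R$ and $f_\infty$ established in \cref{appendix} — their monotonicity and normalization, the identity $f_\infty'(r)\sinh^{d-1}r=\Cda$, and the comparison \eqref{eq:f_R greater} — all of which are already available; the only thing to be mildly careful about is the case distinction in the definition of $f_R$, so that the reduction of the integral to $[a,R]$ is justified. Unlike the bound on $I(f_R)$, no integration by parts is needed, so once \cref{th:I estimate} is in place this lemma is essentially a one-line computation.
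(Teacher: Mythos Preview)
Your argument has a genuine gap: the description of $f_R$ that you ``recall from \eqref{eq:harmonic solutions}'' is not what that result says. \Cref{th:variational} only gives the explicit formula $f_R(r)=f_\infty(r)/f_\infty(R)$ on the interval $(R_0,R)$, i.e.\ \emph{outside the support of $V$}. On $[0,R_0]$ the radial part $f_R$ is the (generally nonzero) solution of the Euler--Lagrange ODE \eqref{eq:euler lagrange} with the potential present; it neither vanishes on $[0,a]$ nor coincides with $f_\infty/f_\infty(R)$ on $[a,R_0]$ unless $V$ is a hard core (in which case $a=R_0$). Since $f_R\geq 0$ is non-decreasing, the contribution $\int_0^{R_0} f_R f_R'\sinh^{d-1}r\,\d r$ is non-negative and cannot simply be dropped, so the equality you write for $K(f_R)$ is not justified.

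This is exactly why the paper \emph{does} use integration by parts here, contrary to your claim that none is needed: writing $f_Rf_R'=\tfrac12(f_R^2)'$ and integrating by parts replaces the unknown derivative $f_R'$ on $[0,R_0]$ by $f_R^2$ against the non-negative weight $(\sinh^{d-1}r)'$, and then the lower bound \eqref{eq:f_R greater}, $f_R(r)\geq f_\infty(r)/f_\infty(R)$, yields an \emph{upper} bound on $K(f_R)$. Undoing the integration by parts with $f_\infty/f_\infty(R)$ in place of $f_R$ (and using $f_\infty(a)=0$) then produces precisely the expression $\frac{\vol{\IS^{d-1}}\Cda}{f_\infty(R)^2}\int_a^R f_\infty(r)\,\d r$ you wrote---but as an inequality, not an equality. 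From that point on your estimates $f_\infty(r)\leq f_\infty(R)$ and $R-a\leq R$ are fine. So your endpoint is correct, but the route requires the same integration-by-parts step and the inequality \eqref{eq:f_R greater} that you dismissed as unnecessary.
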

\begin{proof}
Using $f_R'(r) f_R(r) = \frac{1}{2} (f_R(r)^2)'$, partial integration and \eqref{eq:f_R greater}, we obtain 
\begin{align*}
K(f_R) 
&= \frac{ \vol{\IS^{d-1}} }{2}  \int_0^R (f_R(r)^2)' \sinh^{d-1} r \d r    \\
&=     \frac{ \vol{\IS^{d-1}} }{2} f_R(R)^2 \sinh^{d-1}(R) -     \frac{ \vol{\IS^{d-1}} }{2} \int_0^R f_R(r)^2  (\sinh^{d-1} r)' \d r    \\
&\leq     \frac{ \vol{\IS^{d-1}} }{2} f_R(R)^2 \sinh^{d-1}(R) -     \frac{ \vol{\IS^{d-1}} }{2 f_\infty(R)^2} \int_a^R f_\infty(r)^2  (\sinh^{d-1} r)' \d r    \\
&= \frac{ \vol{\IS^{d-1}} }{f_\infty(R)^2} \int_a^R f_\infty(r) f_\infty'(r) \sinh^{d-1} r \d r .
\end{align*}
Using again that $f_\infty'(r) \sinh^{d-1}r = \Cda$ is independent of $r$, we conclude
\begin{align*}
K(f_R) \leq \frac{ \Cda \vol{\IS^{d-1}} }{f_\infty(R)^2} \int_a^R f_\infty(r) \d r \leq \frac{ \Cda \vol{\IS^{d-1}} }{f_\infty(R)}.
\end{align*}
Plugging in $\vol{\IS^1} = 2\pi$, $\vol{\IS^2} = 4\pi$, $C_2(a) = 1$, $C_3(a) = \tanh a$, and $f_\infty(R) =  \ln \left( \frac{\tanh(R/2)}{\tanh(a/2)} \right)$ for  $d = 2$, $f_\infty(R) =  1 - \frac{\tanh a}{\tanh R}$ for $d=3$ yields the claimed estimates. 
\end{proof}

\begin{remark}
\label{rem:abstract form}
One can write the estimates from \cref{th:I estimate,th:K estimate} in a dimension-independent way as
\begin{align*}
I(f_R) &\leq  f_\infty'(R) \sinh^{d-1} R   \frac{  \vol{\IS^{d-1}} }{f_\infty(R)} (R^2 - a^2), \\
K(f_R) &\leq   f_\infty'(R) \sinh^{d-1} R  \frac{\vol{\IS^{d-1}} }{f_\infty(R)},
\end{align*}
as it can be seen in the proofs. 
\end{remark}

\begin{prop}
\label{th:upper bound}
Let $R \geq R_0$ and $R > a$. In $d=2$ we have for all $\rho$ and $a$ 
\[
\frac{\EN}{N} \leq \frac{2\pi \rho \mu}{\left(1 -  \frac{ 2\pi \rho  }{\ln \frac{\tanh(R/2)}{\tanh(a/2)}} (R^2 - a^2) \right)^2 \ln \frac{\tanh(R/2)}{\tanh(a/2)} } \left( 1 + \frac{4}{3} \frac{\pi \rho}{\ln \frac{\tanh(R/2)}{\tanh(a/2)}} \right),
\]
provided that $\frac{ 2\pi \rho  }{\ln \frac{\tanh(R/2)}{\tanh(a/2)}} (R^2 - a^2) < 1$ and in $d=3$ we have for all $\rho$ and $a$ 
\[
\frac{\EN}{N} \leq \frac{ 4 \pi \rho \mu \tanh a \tanh R }{\left(1 -  4\pi \rho \tanh a \frac{  (R^2 - a^2) \tanh R }{\tanh R - \tanh a}  \right)^2 (\tanh R - \tanh a)} \left( 1 + \frac{8}{3}  \frac{\pi \rho \tanh a \tanh R}{\tanh R - \tanh a} \right),
\]
provided that $4\pi \rho \tanh a \frac{  (R^2 - a^2) \tanh R }{\tanh R - \tanh a} < 1$.
%
%
\end{prop}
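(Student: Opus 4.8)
The plan is to evaluate the variational principle on the Dyson-type trial state of \eqref{eq:variational ansatz} built from $f=f_R$, and then to insert the explicit estimates for the scattering integrals. First I would fix $R\ge R_0$ with $R>a$ and take $f=f_R$, the radial profile from \eqref{eq:harmonic solutions}. By construction (see \cref{th:variational}), $f_R$ is non-decreasing, satisfies $0\le f_R\le 1$, and equals $1$ outside the ball of radius $R$, so that $I(f_R)$ and $K(f_R)$ are integrals over a bounded region and $J(f_R)$ is finite (near $r=0$ the profile is constant, and the integrand of $J$ is bounded with bounded support in $r$); in particular $\Psi$ from \eqref{eq:variational ansatz} lies in the form domain of $H_N$. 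Because the ground state of $H_N$ is positive, hence permutation-symmetric, $\EN=\inf\sigma(H_N)$ coincides with the bottom of the spectrum of the corresponding Schrödinger operator on the full space $L^2(X^{\times N})$, so the a priori non-symmetric state $\Psi$ is an admissible trial function and $\EN\le\sc{\Psi,H_N\Psi}/\nn{\Psi}^2$.

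Next I would verify the hypothesis $\rho I(f_R)<1$ of \cref{th:first estimate}: by \cref{th:I estimate}, $\rho I(f_R)$ is bounded by $\tfrac{2\pi\rho}{\ln(\tanh(R/2)/\tanh(a/2))}(R^2-a^2)$ for $d=2$, and by $4\pi\rho\tanh a\,\tfrac{(R^2-a^2)\tanh R}{\tanh R-\tanh a}$ for $d=3$, which is exactly the quantity assumed to be $<1$ in the statement. \cref{th:first estimate} then gives
\[
\frac{\EN}{N}\;\le\;\frac{1}{\bigl(1-\rho I(f_R)\bigr)^2}\Bigl(\rho\,J(f_R)+\tfrac{2}{3}\mu\,\bigl(\rho\,K(f_R)\bigr)^2\Bigr).
\]
Now I would substitute $J(f_R)=E_R$, with the explicit value computed in \cref{th:variational} (equal to $2\pi\mu/\ln(\tanh(R/2)/\tanh(a/2))$ for $d=2$ and to $4\pi\mu\tanh a\,\tanh R/(\tanh R-\tanh a)$ for $d=3$), and replace $I(f_R)$ and $K(f_R)$ by the upper bounds from \cref{th:I estimate,th:K estimate}. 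This only enlarges the right-hand side, since $t\mapsto(1-t)^{-2}$ is increasing on $[0,1)$ and $K(f_R)$ enters with a nonnegative coefficient (here one uses $1-\rho I(f_R)>0$, just verified). Pulling the common factor $\rho E_R$ out of the bracket and collecting terms produces the two displayed inequalities.

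The steps requiring care are purely bookkeeping: the monotonicity of the substitution just mentioned, and the elementary algebra of factoring out $\rho E_R$. There is essentially no analytic obstacle at this stage — all the substance has already been done in \cref{th:first estimate} (the combinatorial energy estimate for the trial state), in \cref{th:I estimate,th:K estimate} (the integral bounds obtained by passing from $X$ to $\IH^d$ via \eqref{eq:extend integral to Hd} and integrating by parts), and in \cref{th:variational} (the hyperbolic two-body scattering computation giving $E_R=J(f_R)$). The present proposition is just their quantitative packaging, written so that $R$ can later be optimized (e.g.\ $R=R_0+1$) to deduce \cref{th:upper bound coro}.
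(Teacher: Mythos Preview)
Your proposal is correct and follows essentially the same approach as the paper: apply \cref{th:first estimate} with $f=f_R$, identify $J(f_R)=E_R$ via \cref{th:variational}, and substitute the upper bounds for $I(f_R)$ and $K(f_R)$, using monotonicity of $(1-t)^{-2}$ on $[0,1)$ to justify the replacement. The paper's proof is terser but identical in substance; the only cosmetic difference is that it phrases the substitution via the dimension-independent form in \Cref{rem:abstract form} (i.e.\ $I(f_R)\le C_d(a)\vol{\IS^{d-1}}(R^2-a^2)/f_\infty(R)$ and the analogous $K$-bound) before specializing to $d=2,3$, whereas you cite the dimension-specific statements of \cref{th:I estimate,th:K estimate} directly.
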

\begin{proof}
Plugging in the concrete upper bounds of \cref{th:I estimate}, \cref{th:K estimate} (in the form of \Cref{rem:abstract form}) and \cref{th:variational} in \cref{th:first estimate} yields
\begin{align*}
\frac{\sc{ \Psi, H_N \Psi}}{N \nn{\Psi}^2} &\leq \frac{ \rho \Cda \mu \vol{\IS^{d-1}}}{\left(1 -  \rho  \Cda \frac{  \vol{\IS^{d-1}} }{f_\infty(R)} (R^2 - a^2) \right)^2 f_\infty(R) } \left( 1 + \frac{2}{3} \rho \frac{\Cda \vol{\IS^{d-1}}}{f_\infty(R)} \right).
\end{align*}
By using the values for $\Cda$ and $f_\infty(R)$ for $d=2$ and $d=3$  one obtains the claimed upper bounds. 
\end{proof}

\begin{proof}[Proof of \cref{th:upper bound coro}]
Choose $R := \max\{ R_0, a+ 1\}$, which is eligible in \cref{th:upper bound}. Then we find, using $a \leq R_0$ that $R^2-a^2 \leq (R_0+1)^2$. Furthermore, for $d=2$ we have
\begin{align*}
\frac{\rho}{\ln \frac{\tanh (R/2)}{\tanh(a/2)}} &= \frac{\rho}{\ln (\tanh(a/2)^{-1}) \left( 1 - \frac{\ln \tanh(R/2)}{\ln \tanh( a/2) } \right)} \\ 
&\leq \frac{1}{1 - \frac{  \ln \tanh((a+1)/2)}{ \ln \tanh( a/2) } }  \Y \\
&\leq \frac{1}{1 - e^{-1}} \Y \leq 2 \Y, 
\end{align*}
and for $d=3$,
\begin{align*}
\frac{\tanh R}{\tanh R - \tanh a} \leq \frac{\tanh(a+1)}{\tanh(a+1) - \tanh a}  \leq e^{2a} \leq e^{2R_0}.
\end{align*}
Using these estimates, the upper bounds of \cref{th:upper bound}  simplify as follows: 
\[
\frac{\EN}{N} \leq \begin{cases} 
\frac{ 4\pi \mu \Y }{ (1 - 4 \pi  (R_0+1)^2  \Y )^2  } ( 1 + \frac{8 \pi }{3} \Y) 
  &: d=2, \\ 
\frac{ 4 \pi \mu e^{2R_0} \Y }{\left(1 -  4 \pi e^{2R_0} (R_0+1)^2  \Y   \right)^2} \left( 1 + \frac{8\pi }{3} e^{2 R_0} \Y  \right) &:  d=3. \end{cases}
\]
If we assume $4 \pi  (R_0+1)^2  \Y \leq \frac{1}{2}$ and $4 \pi e^{2R_0} (R_0+1)^2   \Y  \leq \frac{1}{2}$, respectively, we get \eqref{eq:upper bounds concrete}. 

For the choice of $\Y_0(\varepsilon)$ note that the inequality $aY(1 + bY) \leq c$ has the solution
\[
\Y \leq \frac{\sqrt{ 4b c/a +1 } -1 }{2b}
\]
for $\Y \geq 0$.
\end{proof}


\section{Lower Bound}
\label{sec:lower bound}

In this section we prove the lower bound (\cref{th:lower bound}). 
Recall that 
 $\Psi_0 \in L^2(X^{\times N})$ is the ground state of the operator $H_N$ \eqref{eq:bose gas hamiltonian} and $\cgs$ the ground state of $-\Delta$ on $X$. Furthermore, the one-particle density matrix $\gamma$ was defined in \eqref{eq:defn gamma}. 

\begin{proof}[Proof of \cref{th:lower bound}]
Since $V\geq 0$, we have
\[
\tr ( - \Delta \gamma )= \sc{ \Psi_0, - \frac{1}{N} \sum_{i=1}^N \Delta_i \Psi_0} \leq \frac{E_N}{N}.
\]
Let $P_m := \ind_{[0,m]}(-\Delta)$ be the spectral projection of  $-\Delta$ to all values smaller than $m$, which makes $-\Delta P_m$ bounded. 
By dominated convergence using that $\tr ( -\Delta P_m \gamma ) \leq E_N /N$ we see that $\tr( (-\Delta + \Delta P_m) \gamma) \to 0$, $m \to \infty$. Now, by the spectral theorem we can write
\[
\gamma = \sum_n p_n \sc{ \cdot, \phi_n} \phi_n, \qquad \sum_n p_n =1, 
\]
with $(\phi_n)$ being an orthonormal basis  of $L^2(X)$. 
We obtain 
\begin{align*}
\tr(-\Delta \gamma) &=  \lim_{m \to \infty} \sum_n \sc{ \phi_n, -\Delta P_m \gamma \phi_n} \\
&= \lim_{m \to \infty} \sum_n p_n \sc{ \phi_n, -\Delta P_m \phi_n} \\
&\geq \Xi \sum_n p_n \nn{  \ket \cgs \bra \cgs ^\perp \phi_n }^2.
\end{align*}
Thus,
\begin{align*}
\sc{ \cgs, \gamma  \cgs} &= \sum_n p_n \abs { \sc{ \cgs, \phi_n} }^2   = \sum_n p_n \left(1 - \nn{\ket \cgs \bra \cgs^\perp \phi_n}^2 \right) \\ &= 1 -  \sum_n p_n  \nn{\ket \cgs \bra \cgs^\perp \phi_n}^2 \geq 1- \frac{E_N}{N \Xi}. \qedhere
\end{align*}
\end{proof}

\hyphenation{Deu-chert}
\hyphenation{Grund-leh-ren}
\section*{Acknowledgments}
The authors thank Christian Brennecke, Matthew de Courcy-Ireland, Andreas Deuchert, S{\o}ren Fournais, and Christian Hainzl for useful comments.
\appendix

\section{Variational principle}


In this part we show existence and uniqueness of the ground state for the key two-particle scattering problem in the hyperbolic setting. This will be used in the choice of the $N$-particle test functions in the upper bound in \Cref{sec:upper bound}. We also define a `hyperbolic scattering length' $a$. As in the Euclidean case it will correspond to the radius of a hardcore potential with the same scattering behavior. The arguments follow closely those in \cite{2d_bosegas,greenbook}.  

Let $V \: [0,\infty) \rightarrow [0,\infty)$ be a measurable function with essential compact support and let $R_0 > 0$ such that $\supp V \subseteq [0,R_0]$. Let $o \in \IH^d$ be fixed. 
For $R > R_0$ and $\phi \in H^1(B_R(o))$, we define the functional
\begin{align*}
\Ee_R(\phi) &:= \int_{B_R(o) \subseteq \IH^d} \left( \abs{ \nabla \phi(\z) }^2 + \frac{1}{2} V(d(o,\z)) \abs{ \phi(\z)}^2  \right) \d \z.
\end{align*}
\begin{remark}
With this functional we can describe two-particle energies on a $d$-dimensional hyperbolic manifold $X$, cf. the proof of \cref{th:first estimate}. 
\end{remark}

\begin{thm}
\label{th:variational}
In the class of functions $\phi \in H^1(B_R(o))$ with $\phi(\z) = 1$ for a.e. $\z$ with $\absh \z = 1$ there  exists a unique minimizer $\phi_{R}$ of $\Ee_R$. It is spherically symmetric, non-zero and satisfies the Euler-Lagrange equation
\begin{align}
\label{eq:euler lagrange}
-\mu \Delta \phi(\z) + \frac{1}{2} V(\z)\phi(\z)  = 0.
\end{align}
For $R_0 < r < R$, we have $\phi(\z) = f_R(\absh \z)$ with 
\begin{align}
\label{eq:harmonic solutions}
f_{R}(r) :=   \frac{f_\infty(r)}{f_\infty(R)}, \qquad f_\infty(r) := \begin{cases}  \ln \left( \frac{\tanh(r/2)}{\tanh(a/2)} \right) &: d = 2, \\ 1 - \frac{\tanh a}{\tanh r} &: d = 3, \end{cases}
\end{align}
for some number $a > 0$.  The energy corresponding to $\phi$ is given by 
\[
E_R := \Ee_R(\phi) =  \frac{ \mu f'(R) \sinh^{d-1} R  \vol{\IS^{d-1}}}{ f_\infty(R) }  =  \begin{cases} \frac{2 \pi \mu }{   \ln \left( \frac{\tanh(R/2)}{\tanh(a/2)} \right) } &: d = 2, \\ \frac{ 4\pi\mu a }{ 1 - \frac{\tanh a}{\tanh R} }  &: d = 3. \end{cases}
\]
Finally, we have that $f_R$ is non-decreasing and
\begin{align}
\label{eq:f_R greater}
f_R(r) \geq \frac{f_\infty(r)}{f_\infty(R)} \text{ for all } r \geq a. 
\end{align}

\end{thm}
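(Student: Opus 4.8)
The plan is to transcribe the Euclidean argument of \cite{2d_bosegas,greenbook} into the hyperbolic setting, replacing the radial Euclidean Laplacian by $\frac{1}{\sinh^{d-1}r}\partial_r\!\bigl(\sinh^{d-1}r\,\partial_r\bigr)$ and the Euclidean radial harmonic functions $r^{2-d}$, $\ln r$ by $g_2(r):=\ln\tanh(r/2)$ and $g_3(r):=-\coth r$, which are harmonic on $\IH^d\setminus\{o\}$ and satisfy $\sinh^{d-1}r\,g_d'(r)\equiv 1$. (I read the kinetic term of $\Ee_R$ with the coefficient $\mu$, consistently with \eqref{eq:euler lagrange}.) For existence of a minimizer, $\Ee_R$ is a nonnegative quadratic form on the affine set $\mathcal C:=\{\phi\in H^1(B_R(o)):\phi=1\text{ on }\partial B_R(o)\}$; since $B_R(o)$ is a bounded domain, $\Ee_R(\phi)\ge\mu\nn{\nabla\phi}_{L^2}^2$ together with the Poincaré inequality for $\phi-1\in H^1_0(B_R(o))$ gives coercivity on $\mathcal C$, and weak lower semicontinuity holds because the gradient term is convex while the potential term is weakly continuous via the compact embedding $H^1(B_R(o))\hookrightarrow L^2(B_R(o))$; the direct method then produces a minimizer. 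Uniqueness follows from the parallelogram identity $\Ee_R(\tfrac{\phi_1+\phi_2}{2})+\Ee_R(\tfrac{\phi_1-\phi_2}{2})=\tfrac12\bigl(\Ee_R(\phi_1)+\Ee_R(\phi_2)\bigr)$: two minimizers force $\Ee_R(\tfrac{\phi_1-\phi_2}{2})=0$, hence $\nabla(\phi_1-\phi_2)=0$ and, via the boundary condition, $\phi_1=\phi_2$. Since $\Ee_R(|\phi|)\le\Ee_R(\phi)$ and $|\phi|\in\mathcal C$ whenever $\phi\in\mathcal C$, the unique minimizer $\phi_R$ is nonnegative, and it is non-zero because $\phi_R=1$ on $\partial B_R(o)$.

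Next I would use symmetry: the stabilizer of $o$ in $\Iso(\IH^d)$ preserves $\Ee_R$, $\mathcal C$ and $B_R(o)$, so $\phi_R\circ g$ is again a minimizer for every such $g$, whence $\phi_R\circ g=\phi_R$ by uniqueness, i.e.\ $\phi_R(\z)=f_R(\absh\z)$ for a function $f_R$ on $[0,R]$. Taking the first variation $\frac{d}{dt}\Ee_R(\phi_R+t\psi)\big|_{t=0}=0$ along $\psi\in H^1_0(B_R(o))$ gives the weak form of \eqref{eq:euler lagrange}, which in radial coordinates reads
\[
\bigl(\sinh^{d-1}r\,f_R'(r)\bigr)'=\tfrac{1}{2\mu}V(r)\,\sinh^{d-1}r\,f_R(r),\qquad 0<r<R,
\]
with $f_R(R)=1$ and $f_R$ regular at $0$ (the $H^1$-bound excludes the singular homogeneous solution). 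Since $V\in L^\infty$, $\sinh^{d-1}r\,f_R'$ is absolutely continuous, so $f_R\in C^1$ and \eqref{eq:euler lagrange} holds strongly.

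For the explicit form, on the annulus $R_0<r<R$ one has $V\equiv 0$, so $\sinh^{d-1}r\,f_R'(r)$ is constant and integration gives $f_R(r)=\alpha+\beta\,g_d(r)$ there; on $[0,R_0]$, by linearity $f_R$ is a scalar multiple of the solution $w$ of the same ODE that is regular at $0$, and one checks by integrating the ODE that $w>0$, $\sinh^{d-1}r\,w'\ge 0$, and $\sinh^{d-1}R_0\,w'(R_0)=\frac{1}{2\mu}\int_0^{R_0}\sinh^{d-1}s\,V(s)w(s)\,ds>0$ when $V\not\equiv 0$. Matching $f_R$ and $f_R'$ at $r=R_0$ then determines the ratio $\beta/\alpha>0$ — equivalently the zero $a$ of the harmonic profile, defined by $\alpha+\beta g_d(a)=0$ — purely in terms of $V$ and independently of $R$, and forces $0<a<R_0$ so that $f_\infty(R)\neq 0$. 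Writing $f_\infty$ for $g_d-g_d(a)$ up to the normalization in \eqref{eq:harmonic solutions} (which affects neither the ratio nor $a$), we obtain $f_R=f_\infty/f_\infty(R)$ on $(R_0,R)$, and $f_R$ is non-decreasing: on $[0,R_0]$ because $\sinh^{d-1}r\,f_R'(r)=\frac{1}{2\mu}\int_0^r\sinh^{d-1}s\,V(s)f_R(s)\,ds\ge 0$, and on $(R_0,R)$ because there $\sinh^{d-1}r\,f_R'\equiv f_\infty'(R)\sinh^{d-1}R/f_\infty(R)=\Cda/f_\infty(R)>0$.

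Finally, Green's identity together with the strong form of \eqref{eq:euler lagrange} makes the bulk term $\int_{B_R(o)}\phi_R\bigl(-\mu\Delta\phi_R+\tfrac12 V\phi_R\bigr)$ vanish, leaving $E_R=\Ee_R(\phi_R)=\mu\int_{\partial B_R(o)}\phi_R\,\partial_\nu\phi_R\,dS=\mu f_R'(R)\,\sinh^{d-1}R\,\vol{\IS^{d-1}}$; inserting $f_R'(R)=f_\infty'(R)/f_\infty(R)$, $f_\infty'(R)\sinh^{d-1}R=\Cda$, the explicit $f_\infty$, and $\vol{\IS^1}=2\pi$, $\vol{\IS^2}=4\pi$ gives the stated closed forms. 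For \eqref{eq:f_R greater}: it is an equality on $[R_0,R]$, and on $[a,R_0]$ I compare $f_R$ with the extended harmonic profile $h:=f_\infty/f_\infty(R)$ — both agree in value and derivative at $R_0$, $\sinh^{d-1}r\,h'$ is constant while $\sinh^{d-1}r\,f_R'$ is non-decreasing, so $\sinh^{d-1}r\,f_R'\le\sinh^{d-1}r\,h'$ on $[0,R_0]$, hence $f_R-h$ is non-increasing there and vanishes at $R_0$, giving $f_R\ge h$; for $r\le a$ it is trivial since $h\le 0\le f_R$. The genuinely new work, and the main obstacle, is this ODE analysis on the annulus: identifying the hyperbolic harmonic profiles $g_d$, proving that the scattering length $a$ read off from the matching at $R_0$ is independent of $R$, and running the comparison for \eqref{eq:f_R greater} with only $V\in L^\infty$ (so $f_R\in C^1$ but not $C^2$); the remainder is a routine application of the direct method, convexity and symmetrization.
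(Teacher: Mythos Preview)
Your argument is correct and, in several places, follows a different and somewhat more streamlined route than the paper.

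The paper first performs an explicit symmetrization step: it replaces a competitor $\phi$ by the square root of the spherical average of $\phi^2$ and shows, via a convexity/Jensen argument for $f\mapsto\int(\nabla\sqrt{f(d(o,\z))})^2$, that this does not increase $\Ee_R$; existence is then obtained for radial competitors by extending to $H^1(\IH^d)$ and invoking reflexivity there, and uniqueness is read off from the radial ODE with the boundary data. You instead run the direct method on the bounded ball using Poincar\'e and Rellich--Kondrachov, get uniqueness immediately from the parallelogram identity for the quadratic form $\Ee_R$, and then deduce spherical symmetry a posteriori from uniqueness plus rotational invariance. Your order of operations is cleaner for a quadratic functional and avoids the symmetrization lemma; the paper's symmetrization is more robust in that it would survive for non-quadratic energies where the parallelogram trick is unavailable.

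For the comparison inequality \eqref{eq:f_R greater}, the paper simply cites the Hopf maximum principle (via \cite[Lemma~C.2]{greenbook}), whereas you give a direct monotone comparison of $\sinh^{d-1}r\,f_R'$ against the constant $\sinh^{d-1}r\,h'$, matching at $R_0$. This is more self-contained and only uses the $C^1$ regularity you actually have from $V\in L^\infty$. Your verification that the scattering length $a$ defined by the matching at $R_0$ is independent of $R$ (because $w'(R_0)/w(R_0)$ is fixed by $V$ alone) is also more explicit than the paper, which leaves this implicit in the ODE formulation. The energy computation is the same in spirit: the paper writes out the radial integration by parts, you phrase it as Green's identity; both reduce to the boundary term $\mu f_R'(R)\sinh^{d-1}R\,\vol{\IS^{d-1}}$.
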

\begin{remark}
\label{rem:cda}
\begin{enumerate}[label=(\alph*)]
\item Notice that we indeed defined $a$ in such a way that $f_\infty(a) = 0$, i.e., if $V$ is a hardcore potential with radius $R_0$, then $a = R_0$.
\item As $f$ is an indefinite integral of $( \sinh^{d-1} )^{-1}$, the quantity  $f'(R) \sinh^{d-1} R$ only depends on $a$ (or $V$) but not on  $R$. Therefore, we also write 
\begin{align*}
\Cda := f'(R) \sinh^{d-1} R.
\end{align*}
\end{enumerate}
\end{remark}
\newcommand{\wphi}{\widehat{\phi}}
\begin{proof}
First, we show that we can restrict to non-negative and spherically symmetric functions as minimizers. Let  $f,g$ real-valued functions on $\IR$. Then we find (cf. \cite[Theorem 7.8]{ll_analysis})
\[
 \left( \frac{d}{dr} \sqrt{ f^2 + g^2 }  \right)^2 +  \frac{ ( fg' - g f')^2 }{f^2 + g^2} = (f')^2 + (g')^2
\]
for all points where they are differentiable and $f^2 + g^2 > 0$. Thus, we have a.e.
\[
 \abs{ \frac{d}{dr} \sqrt{ f^2 + g^2 }   }^2  \leq  \left(  \abs{  f' }^2  + \abs{ g' }^2 \right).
\]
Furthermore, 
\[
\nabla \sqrt{ f(d(o,\z)) }  = \frac{d}{dr} \sqrt{ f(r) }\bigg|_{r = d(0,\z)} \nabla d(o,\z)
\]
for a.e. $z \in \IH^d$.
Hence, the map $f \mapsto \int_{\IH^d} \left( \nabla \sqrt{ f(d(o,\z))} \right)^2 \d \z$, is convex. For $\phi \in H^1(B_R(o))$ let $\widetilde \phi$ be the spherically symmetric function given by the square root of the spherical average of $\phi^2$. By the generalized Jensen inequality for probability measures, we obtain $\nn{ \nabla \widetilde \phi}^2 \leq \nn{\nabla \phi}^2$ and thus also $\Ee_R(\widetilde \phi) \leq \Ee_R(\phi)$ because the potential is assumed to be spherically symmetric as well. 

\textit{Existence of a minimizer:}
As $\Ee_R$ is bounded from below, there exists a minimizing sequence of spherically symmetric $(\phi_n)$ in $H^1(B_R(o))$ with $\phi_n(\z) = 1$ for a.e. $\z$ with $\absh \z = 1$ and all $n$. Define $\widehat{\phi}_n \in H^1(\IH^d)$ by
$\widehat{\phi}_n(\z) := \phi_n(\z)$ for $\z \in B_R(o)$ and $\phi_n(\z) = h(d(o,\z))$ for some $h \in C^\infty(\IR_+)$ with $h(r) = 1$ for $r < R+1$ and $h(r) = 0$ for $r > 2R+1$. As $\sup_{n} \nn{\wphi_n }_{H^1(\IH^d)} < \infty$ (and because $H^1(\IH^d)$ is reflexive, cf. \cite[Proposition 2.4]{sobolev_manifolds}), one can find a subsequence $(\wphi_{n_k})$ in $H^1(\IH^d)$ which converges weakly in $H^1(\IH^d)$ to some $\wphi \in H^1(\IH^d)$, which is rotationally symmetric. We then have that $(\phi_{n_k})$ also converges weakly to $\phi := \wphi|_{B_R(o)} \in H^1(B_R(o))$. One gets $\phi(\z) = 1$ for a.e. $\z$ with $\absh \z = 1$ because the radial part is continuous outside of the origin, and $\wphi(\z) = 1$ for $d(o,\z) \in (R,R+1)$. By equivalence of lower semicontinuity and weak lower semicontinuity for convex functions, we obtain $\lim_{k \to \infty} \Ee_R(\phi_{n_k}) \geq \Ee_R(\phi)$ and therefore, $\phi$ is a minimizer. 


The Euler-Lagrange equation \eqref{eq:euler lagrange} follows by considering $\frac{d}{\d \delta }|_{\delta = 0} \Ee_R(\phi + \delta \psi) = 0$ for all infinitely differentiable functions $\psi$ which vanish for all $\z$ with $d(o,\z) \geq R$. Furthermore, \eqref{eq:euler lagrange} can be written down for the radial part $f_R$ on $(0,R)$ given by $f_R(d(o,\z)) := \phi(\z)$, which is a linear ODE with boundary values $f_R(R) = 1$, $f_R'(R) = 0$. Thus, it has a unique solution.

For $R_0 < d(o,\z) < R$ we infer from \eqref{eq:euler lagrange} that $-\Delta \phi = 0$. As the Laplace-Beltrami operator on $\IH^d$ is given in hyperbolic polar coordinates by 
\[
\Delta = \sinh (r) ^{1-d} \partial_r \left( \sinh(t)^{d-1} \partial_r  \right)  + \sinh(r)^{-2} \Delta_\Sigma ,
\]
we find that $ \partial_r \left( \sinh(t)^{d-1} \partial_r f_R(r)  \right) = 0$. The corresponding solutions for $d=2$ and $d=3$ are given by \eqref{eq:harmonic solutions}.


For the energy we use partial integration and $f_\infty'(r) = \frac{1}{\sinh^{d-1} r}$. Thus we get
\begin{align*}
E_R &= \vol{\IS^{d-1}} \bigg( \mu \left[ \sinh^{d-1} r f_R(r) \partial_r f_R(r) \right]_0^R  \\ &\qquad +  \int_0^R  \left(-\mu \frac{1}{\sinh^{d-1}(r)} \partial_r ( \sinh^{d-1}(r) \partial_r f_R(r)  )  + \frac{1}{2} V(r) f_R(r) \right) f_R(r) \sinh^{d-1}( r ) \d r \bigg) \\
&= \frac{ \mu \vol{\IS^{d-1}}}{ f_\infty(R)^2 } \left[ \sinh^{d-1} r f_\infty(r) \partial_r f_\infty(r) \right]_0^R  \\ &\qquad + \vol{\IS^{d-1}}  \int_0^R  \underbrace{ \left(-\mu \Delta_r f_R(r)  + \frac{1}{2} V(r)f_R(r) \right) }_{=0} f_R(r)  \sinh^{d-1}(r) \d r \\
&=  \frac{ \mu \vol{\IS^{d-1}}  }{f_\infty(R)} \sinh^{d-1} (R)  f_\infty'(R).
\end{align*}

The last statement \eqref{eq:f_R greater} follows in the same way as in \cite[Lemma C.2]{greenbook} from the Hopf maximum principle. 
\end{proof}

\label{appendix}

\renewcommand*{\bibfont}{\footnotesize}
\printbibliography

@article{sarnak,
  title={Spectra of hyperbolic surfaces},
  author={Sarnak, Peter},
  journal={Bull. Am. Math. Soc.},
  volume={40},
  number={4},
  pages={441--478},
  year={2003}
}

@article{bbcs3,
  title={Bogoliubov theory in the Gross--Pitaevskii limit},
  author={Boccato, Chiara and Brennecke, Christian and Cenatiempo, Serena and Schlein, Benjamin},
  journal={Acta Math.},
  volume={222},
  number={2},
  pages={219--335},
  year={2019},
  publisher={Institut Mittag-Leffler}
}

@article{yy,
  title={The second order upper bound for the ground energy of a Bose gas},
  author={Yau, Horng-Tzer and Yin, Jun},
  journal={J. Stat. Phys.},
  volume={136},
  number={3},
  pages={453--503},
  year={2009},
  publisher={Springer}
}

@article{seiringer,
  title={Gross-Pitaevskii theory of the rotating Bose gas},
  author={Seiringer, Robert},
  journal={Commun. Math. Phys.},
  volume={229},
  number={3},
  pages={491--509},
  year={2002},
  publisher={Springer}
}

@incollection{lsy,
  title={Bosons in a trap: A rigorous derivation of the Gross-Pitaevskii energy functional},
  author={Lieb, Elliott H and Seiringer, Robert and Yngvason, Jakob},
  booktitle={The Stability of Matter: From Atoms to Stars},
  pages={685--697},
  year={2001},
  publisher={Springer}
}

@misc{fournais_solovej2,
      title={The energy of dilute Bose gases II: The general case}, 
      author={Soeren Fournais and Jan Philip Solovej},
      year={2021},
      eprint={2108.12022},
      archivePrefix={arXiv}
}

@article{hainzl,
  title={Another proof of BEC in the GP-limit},
  author={Hainzl, Christian},
  journal={J. Math. Phys.},
  volume={62},
  number={5},
  pages={051901},
  year={2021},
  publisher={AIP Publishing LLC}
}

@article{rougerie,
  title={Scaling limits of bosonic ground states, from many-body to non-linear Schr{\"o}dinger},
  author={Rougerie, Nicolas},
  journal={EMS Surv. Math. Sci.},
  volume={7},
  number={2},
  pages={253--408},
  year={2021}
}

@article{mirzakhani,
  title={Growth of Weil-Petersson volumes and random hyperbolic surface of large genus},
  author={Mirzakhani, Maryam},
  journal={J. Differ. Geom.},
  volume={94},
  number={2},
  pages={267--300},
  year={2013},
  publisher={Lehigh University}
}

@book{greenbook,
  title={The Mathematics of the Bose Gas and its Condensation},
  author={Lieb, E.H. and Seiringer, R. and Solovej, J.P. and Yngvason, J.},
  isbn={9783764390907},
  series={Oberwolfach Seminars},
  url={https://books.google.ch/books?id=ctF-jwEACAAJ},
  year={2009},
  publisher={Birkh{\"a}user Basel}
}

@book{ll_analysis,
  title={Analysis},
  author={Lieb, E.H. and Loss, M.},
  isbn={9780821827833},
  lccn={01018215},
  series={CRM Proceedings \& Lecture Notes},
  url={https://books.google.ch/books?id=Eb\_7oRorXJgC},
  year={2001},
  publisher={American Mathematical Society}
}

@article{2d_bosegas,
  title={The ground state energy of a dilute two-dimensional Bose gas},
  author={Lieb, Elliott H and Yngvason, Jakob},
  journal={J. Stat. Phys.},
  volume={103},
  number={3},
  pages={509--526},
  year={2001},
  publisher={Springer}
}

@incollection{bosons_in_trap,
  title={Bosons in a trap: A rigorous derivation of the Gross-Pitaevskii energy functional},
  author={Lieb, Elliott H and Seiringer, Robert and Yngvason, Jakob},
  booktitle={The Stability of Matter: From Atoms to Stars},
  pages={685--697},
  year={2001},
  publisher={Springer}
}

@article{burger_sarnak,
  title={Ramanujan duals II},
  author={Burger, Marc and Sarnak, Peter},
  journal={Invent. Math.},
  volume={106},
  number={1},
  pages={1--11},
  year={1991},
  publisher={Springer}
}

@book{modularforms_book,
  title={A first course in modular forms},
  author={Diamond, Fred and Shurman, Jerry Michael},
  volume={228},
  year={2005},
  publisher={Springer}
}

@article{kim16functoriality,
  title={Appendix 2 to Functoriality for the exterior square of GL4 and the symmetric fourth of GL2.},
  author={Kim, H and Sarnak, Peter},
  journal={J. Am. Math. Soc.},
  volume={16},
  pages={139--183},
  year={2003}
}

@book{selberg1,
  title={Collected Papers I},
  author={Selberg, A.},
  isbn={9783642410215},
  series={Springer Collected Works in Mathematics},
  url={https://books.google.de/books?id=O3otnwEACAAJ},
  year={2014},
  publisher={Springer Berlin Heidelberg}
}

@inproceedings{selberg2,
  title={On the estimation of Fourier coefficients of modular forms},
  author={Selberg, Atle},
  booktitle={Proc. Sympos. Pure Math.},
  volume={8},
  pages={1--15},
  year={1965},
  organization={Amer. Math. Soc.}
}

@article{arbitrarydimension,
  title={Kloosterman sums for Clifford algebras and a lower bound for the positive eigenvalues of the Laplacian for congruence subgroups acting on hyperbolic spaces},
  author={Elstrodt, J{\"U}RGEN and Grunewald, Fritz and Mennicke, Jens},
  journal={Invent. Math.},
  volume={101},
  number={1},
  pages={641--685},
  year={1990},
  publisher={Springer}
}

@inproceedings{li1987poincare,
  title={Poincar{\'e} series for SO (n, 1)},
  author={Li, Jian-Shu and Piatetski-Shapiro, I and Sarnak, P},
  booktitle={Proc.: Math. Sci.},
  volume={97},
  number={1},
  pages={231--237},
  year={1987},
  organization={Springer}
}

@article{clozel2003demonstration,
  title={D{\'e}monstration de la conjecture $\tau$},
  author={Clozel, Laurent},
  journal={Invent. Math.},
  volume={151},
  number={2},
  pages={297--328},
  year={2003},
  publisher={Springer}
}

@article{sarnak3d,
  title={The arithmetic and geometry of some hyperbolic three manifolds},
  author={Sarnak, Peter},
  journal={Acta Math.},
  volume={151},
  pages={253--295},
  year={1983},
  publisher={Institut Mittag-Leffler}
}

@phdthesis{monk_thesis,
  title={Geometry and spectrum of typical hyperbolic surfaces},
  author={Monk, Laura},
  year={2021},
  school={Universit{\'e} de Strasbourg}
}

@article{wright_survey,
  title={A tour through Mirzakhani’s work on moduli spaces of Riemann surfaces},
  author={Wright, Alex},
  journal={Bull. Am. Math. Soc.},
  volume={57},
  number={3},
  pages={359--408},
  year={2020}
}

@misc{lambda_positive_random1,
      title={Random hyperbolic surfaces of large genus have first eigenvalues greater than $\frac{3}{16}-\epsilon$}, 
      author={Yunhui Wu and Yuhao Xue},
      year={2021},
      eprint={2102.05581},
      archivePrefix={arXiv}
}

@misc{lambda_positive_random2,
      title={Towards optimal spectral gaps in large genus}, 
      author={Michael Lipnowski and Alex Wright},
      year={2021},
      eprint={2103.07496},
      archivePrefix={arXiv}
}

@article{positivity_improving_original,
  title={Maximum principle for parabolic inequalities and the heat flow on open manifolds},
  author={Dodziuk, Jozef},
  journal={Indiana Univ. Math. J.},
  volume={32},
  number={5},
  pages={703--716},
  year={1983},
  publisher={JSTOR}
}

@book{positivity_improving_book,
  title={Graphs and Discrete Dirichlet Spaces},
  author={Keller, M. and Lenz, D. and Wojciechowski, R.K.},
  isbn={9783030814588},
  series={Grundlehren der mathematischen Wissenschaften},
  url={https://books.google.de/books?id=FWx-zgEACAAJ},
  year={2021},
  publisher={Springer International Publishing}
}

@book{rs4,
  title={IV: Analysis of Operators},
  author={Reed, M. and Simon, B.},
  isbn={9780125850049},
  lccn={75182650},
  series={Methods of Modern Mathematical Physics},
  url={https://books.google.de/books?id=WwApAQAAMAAJ},
  year={1978},
  publisher={Academic Press}
}

@article{bose1924,
  title={Plancks Gesetz und Lichtquantenhypothese},
  author={Bose, Satyendra Nath},
  year={1924},
  journal = {Z. Phys.},
  publisher={Springer},
  pages={178–181}
}

@article{einstein1924,
  title={Quantentheorie des einatomigen idealen Gases},
  author={Einstein, Albert},
  journal={Sitzber. Kgl. Preuss. Akad. Wiss.},
  pages={261--267},
  year={1924}
}

@article{lieb_seiringer_firstproof,
  title={Proof of Bose-Einstein Condensation for Dilute Trapped Gases},
  author={Lieb, Elliott H and Seiringer, Robert},
  journal={Phys. Rev. Lett.},
  volume={88},
  number={17},
  pages={170409},
  year={2002},
  publisher={APS}
}

@article{dyson,
  title={Ground-State Energy of a Hard-Sphere Gas},
  author={Dyson, Freeman John},
  journal={Phys. Rev.},
  volume={106},
  number={1},
  pages={20},
  year={1957},
  publisher={APS}
}

@article{gs_energy,
  title={Ground State Energy of the Low Density Bose Gas},
  author={Lieb, Elliolt H and Yngvason, Jakob},
  volume={80},
  pages={755--758},
  journal={Phys. Rev. Lett.},
  year={1998},
  publisher={Springer}
}

@article{quantum_finetti,
  title={Ground states of large bosonic systems: the Gross--Pitaevskii limit revisited},
  author={Nam, Phan Thanh and Rougerie, Nicolas and Seiringer, Robert},
  journal={Anal. PDE.},
  volume={9},
  number={2},
  pages={459--485},
  year={2016},
  publisher={Mathematical Sciences Publishers}
}

@inproceedings{fournais_lengthscales,
  title={Length scales for BEC in the dilute Bose gas},
  author={Fournais, S{\o}ren},
  booktitle={Partial Differential Equations, Spectral Theory, and Mathematical Physics},
  pages={115--133},
  year={2021}
}

@article{fournais_solovej,
  title={The energy of dilute Bose gases},
  author={Fournais, S{\o}ren and Solovej, Jan Philip},
  journal={Ann. Math.},
  volume={192},
  number={3},
  pages={893--976},
  year={2020},
  publisher={JSTOR}
}

@article{bbcs1,
  title={Complete Bose--Einstein condensation in the Gross--Pitaevskii regime},
  author={Boccato, Chiara and Brennecke, Christian and Cenatiempo, Serena and Schlein, Benjamin},
  journal={Commun. Math. Phys.},
  volume={359},
  number={3},
  pages={975--1026},
  year={2018},
  publisher={Springer}
}

@article{bbcs2,
  title={Optimal rate for Bose--Einstein condensation in the Gross--Pitaevskii regime},
  author={Boccato, Chiara and Brennecke, Christian and Cenatiempo, Serena and Schlein, Benjamin},
  journal={Commun. Math. Phys.},
  volume={376},
  number={2},
  pages={1311--1395},
  year={2020},
  publisher={Springer}
}

@misc{nnt,
      title={Optimal rate of condensation for trapped bosons in the Gross-Pitaevskii regime}, 
      author={Phan Thành Nam and Marcin Napiórkowski and Julien Ricaud and Arnaud Triay},
      year={2021},
      eprint={2001.04364},
      archivePrefix={arXiv}
}

@article{brietzke_solovej1,
  title={The second-order correction to the ground state energy of the dilute Bose gas},
  author={Brietzke, Birger and Solovej, Jan Philip},
  journal={Ann. Henri Poincaré},
  volume={21},
  number={2},
  pages={571--626},
  year={2020},
  organization={Springer}
}

@book{sobolev_manifolds,
  title={Sobolev Spaces on Riemannian Manifolds},
  author={Hebey, E.},
  number={Nr. 1635},
  isbn={9783540617228},
  lccn={lc96036418},
  series={Lecture Notes in Mathematics},
  url={https://books.google.de/books?id=fusTiW7E-awC},
  year={1996},
  publisher={Springer}
}

@misc{koma2021bose,
      title={Bose-Einstein Condensation for Lattice Bosons}, 
      author={Tohru Koma},
      year={2021},
      eprint={2106.00863},
      archivePrefix={arXiv}
}

@article{cognola1993bose,
  title={Bose-Einstein condensation of scalar fields on hyperbolic manifolds},
  author={Cognola, Guido and Vanzo, Luciano},
  journal={Phys. Rev. D},
  volume={47},
  number={10},
  pages={4575},
  year={1993},
  publisher={APS}
}

@article{graphs_to_geometry,
  title={Quantum simulation of hyperbolic space with circuit quantum electrodynamics: From graphs to geometry},
  author={Boettcher, Igor and Bienias, Przemyslaw and Belyansky, Ron and Koll{\'a}r, Alicia J and Gorshkov, Alexey V},
  journal={Phys. Rev. A},
  volume={102},
  number={3},
  pages={032208},
  year={2020},
  publisher={APS}
}

@article{hyp_lattices1,
  title={Hyperbolic lattices in circuit quantum electrodynamics},
  author={Koll{\'a}r, Alicia J and Fitzpatrick, Mattias and Houck, Andrew A},
  journal={Nature},
  volume={571},
  number={7763},
  pages={45--50},
  year={2019},
  publisher={Nature Publishing Group}
}

@article{hyp_lattices2,
  title={Line-graph lattices: Euclidean and non-Euclidean flat bands, and implementations in circuit quantum electrodynamics},
  author={Koll{\'a}r, Alicia J and Fitzpatrick, Mattias and Sarnak, Peter and Houck, Andrew A},
  journal={Commun. Math. Phys.},
  volume={376},
  number={3},
  pages={1909--1956},
  year={2020},
  publisher={Springer}
}

@article{maldacena1999large,
  title={The large N limit of superconformal field theories and supergravity},
  author={Maldacena, Juan},
  journal={Int. J. Theor. Phys.},
  volume={38},
  number={4},
  pages={1113--1133},
  year={1999},
  publisher={Springer}
}

@article{Witten1998AntideSS,
  title={Anti-de Sitter space and holography},
  author={Edward Witten},
  journal={Adv. Theor. Math. Phys.},
  year={1998},
  volume={2},
  pages={253-291}
}

@article{hu2019quantum,
  title={Quantum simulation of Unruh radiation},
  author={Hu, Jiazhong and Feng, Lei and Zhang, Zhendong and Chin, Cheng},
  journal={Nat. Phys.},
  volume={15},
  number={8},
  pages={785--789},
  year={2019},
  publisher={Nature Publishing Group}
}

@article{PRXQuantum.2.017003,
  title = {Quantum Simulators: Architectures and Opportunities},
  author = {Altman, Ehud and Brown, Kenneth R. and Carleo, Giuseppe and Carr, Lincoln D. and Demler, Eugene and Chin, Cheng and DeMarco, Brian and Economou, Sophia E. and Eriksson, Mark A. and Fu, Kai-Mei C. and Greiner, Markus and Hazzard, Kaden R.A. and Hulet, Randall G. and Koll\'ar, Alicia J. and Lev, Benjamin L. and Lukin, Mikhail D. and Ma, Ruichao and Mi, Xiao and Misra, Shashank and Monroe, Christopher and Murch, Kater and Nazario, Zaira and Ni, Kang-Kuen and Potter, Andrew C. and Roushan, Pedram and Saffman, Mark and Schleier-Smith, Monika and Siddiqi, Irfan and Simmonds, Raymond and Singh, Meenakshi and Spielman, I.B. and Temme, Kristan and Weiss, David S. and Vu\ifmmode \check{c}\else \v{c}\fi{}kovi\ifmmode \acute{c}\else \'{c}\fi{}, Jelena and Vuleti\ifmmode \acute{c}\else \'{c}\fi{}, Vladan and Ye, Jun and Zwierlein, Martin},
  journal = {PRX Quantum},
  volume = {2},
  pages = {017003},
  numpages = {19},
  year = {2021},
  publisher = {American Physical Society},
  doi = {10.1103/PRXQuantum.2.017003},
  url = {https://link.aps.org/doi/10.1103/PRXQuantum.2.017003}
}

@misc{saa2021higher,
      title={Higher-dimensional Euclidean and non-Euclidean structures in planar circuit quantum electrodynamics}, 
      author={Alberto Saa and Eduardo Miranda and Francisco Rouxinol},
      year={2021},
      eprint={2108.08854},
      archivePrefix={arXiv}
}

@misc{lenggenhager2021electric,
      title={Electric-circuit realization of a hyperbolic drum}, 
      author={Patrick M. Lenggenhager and Alexander Stegmaier and Lavi K. Upreti and Tobias Hofmann and Tobias Helbig and Achim Vollhardt and Martin Greiter and Ching Hua Lee and Stefan Imhof and Hauke Brand and Tobias Kießling and Igor Boettcher and Titus Neupert and Ronny Thomale and Tomáš Bzdušek},
      year={2021},
      eprint={2109.01148},
      archivePrefix={arXiv}
}

@article{maciejko2020hyperbolic,
   title={Hyperbolic band theory},
   volume={7},
   number={36},
   journal={Sci. Adv.},
   publisher={American Association for the Advancement of Science (AAAS)},
   author={Maciejko, Joseph and Rayan, Steven},
   year={2021}
}

@article{ikeda2021hyperbolic,
  title={Hyperbolic band theory under magnetic field and Dirac cones on a higher genus surface},
  author={Ikeda, Kazuki and Aoki, Shoto and Matsuki, Yoshiyuki},
  journal={J. Phys.: Condens. Matter},
  volume={33},
  number={48},
  pages={485602},
  year={2021}
}

@misc{stegmaier2021universality,
      title={Universality of Hofstadter butterflies on hyperbolic lattices}, 
      author={Alexander Stegmaier and Lavi K. Upreti and Ronny Thomale and Igor Boettcher},
      year={2021},
      eprint={2111.05779},
      archivePrefix={arXiv}
}

@article{zhang2021efimov,
  title={Efimov-like states and quantum funneling effects on synthetic hyperbolic surfaces},
  author={Zhang, Ren and Lv, Chenwei and Yan, Yangqian and Zhou, Qi},
  journal={Sci. Bull.},
  volume={66},
  number={19},
  pages={1967--1972},
  year={2021},
  publisher={Elsevier}
}

@article{KIRA2015185,
title = {Hyperbolic Bloch equations: Atom-cluster kinetics of an interacting Bose gas},
journal = {Ann. Phys. (N. Y.)},
volume = {356},
pages = {185-243},
year = {2015},
issn = {0003-4916},
doi = {https://doi.org/10.1016/j.aop.2015.02.030},
url = {https://www.sciencedirect.com/science/article/pii/S0003491615000895},
author = {M. Kira},
keywords = {Bose–Einstein condensate (BEC), Strong many-body interactions, Cluster-expansion approach, Semiconductors vs. BEC, Excitation-induced effects}
}

@article{zhu2021quantum,
  title={Quantum phase transitions of interacting bosons on hyperbolic lattices},
  author={Zhu, Xingchuan and Guo, Jiaojiao and Nikolas, Breuckmann and Guo, Huaiming and Feng, Shiping},
  journal={J. Phys. Condens. Matter},
  year={2021},
  publisher={IOP Publishing}
}

@online{bec_openproblems,
  author = {Elliott H. Lieb},
  title = {Bose-Einstein Condensation},
  year = 1998,
  url = {http://web.math.princeton.edu/~aizenman/OpenProblems_MathPhys/9801.BoseEinst.tex},
  urldate = {2022-01-18}
}

@article{deuchert2020gross,
  title={Gross--Pitaevskii limit of a homogeneous Bose gas at positive temperature},
  author={Deuchert, Andreas and Seiringer, Robert},
  journal={Arch. Ration. Mech. Anal.},
  volume={236},
  number={3},
  pages={1217--1271},
  year={2020},
  publisher={Springer}
}

@article{bogoliubov,
  title={On the theory of superfluidity},
  author={Bogoliubov, N. N.},
  journal={Izv. Akad. Nauk USSR},
  volume={11},
  number={77},
  year={1947},
  note={Engl. translation in J.Phys. (USSR) 11 (1947), 23.}
}

@article{adhikari2021bose,
  title={Bose--Einstein condensation beyond the Gross--Pitaevskii regime},
  author={Adhikari, Arka and Brennecke, Christian and Schlein, Benjamin},
  journal={Ann. Henri Poincaré},
  volume={22},
  number={4},
  pages={1163--1233},
  year={2021},
  organization={Springer}
}

@article{aizenman2004bose,
  title={Bose-Einstein quantum phase transition in an optical lattice model},
  author={Aizenman, Michael and Lieb, Elliott H and Seiringer, Robert and Solovej, Jan Philip and Yngvason, Jakob},
  journal={Phys. Rev. A},
  volume={70},
  number={2},
  pages={023612},
  year={2004},
  publisher={APS}
}

@article{rotating1,
  title={Derivation of the Gross-Pitaevskii equation for rotating Bose gases},
  author={Lieb, Elliott H and Seiringer, Robert},
  journal={Commun. Math. Phys.},
  volume={264},
  number={2},
  pages={505--537},
  year={2006},
  publisher={Springer}
}

@misc{hide2021near,
      title={Near optimal spectral gaps for hyperbolic surfaces}, 
      author={Will Hide and Michael Magee},
      year={2021},
      eprint={2107.05292},
      archivePrefix={arXiv}
}

@misc{magee2021extension,
      title={Extension of Alon's and Friedman's conjectures to Schottky surfaces}, 
      author={Michael Magee and Frédéric Naud},
      year={2021},
      eprint={2106.02555},
      archivePrefix={arXiv}
}

@book{ eigenvaluestozero,
author = { Chavel, Isaac },
title = {Eigenvalues in Riemannian geometry},
publisher = { Academic Press },
year = { 1984 },
series = {Pure and Applied Mathematics}
}

\end{document}